\DeclareSymbolFont{matha}{OML}{txmi}{m}{it}
\DeclareMathSymbol{\varv}{\mathord}{matha}{118}
\DeclareMathOperator{\Div}{div}
\newtheorem{construction}{Construction}[section]
\newtheorem{theorem}{Theorem}[section]
\newtheorem{lemma}[theorem]{Lemma}
\newtheorem{claim}[theorem]{Claim}
\newtheorem{proposition}[theorem]{Proposition}
\newtheorem{definition}{Definition}[section]
\newtheorem{remark}{Remark}
\newcommand{\gal}{\mathrm{Gal}}
\newcommand{\supp}{{\mathrm{supp}}}
\newcommand{\wt}{{\mathrm{wt}}}
\newcommand{\me}{\mathrm{e}}
\newcommand*{\rom}[1]{\expandafter\@slowromancap\romannumeral #1@}
\author[1]{Venkatesan Guruswami \thanks{VG's work was done while visiting NTU, Singapore.}}
\author[2]{Satyanarayana V. Lokam}
\author[3]{Sai Vikneshwar Mani Jayaraman \thanks{SVMJ's work was done during an internship at MSR, India.}}
\affil[1]{Carnegie Mellon University, USA}
\affil[2]{Microsoft Research, India}
\affil[3]{University at Buffalo, SUNY}
\title{$\epsilon$-MSR Codes: Contacting Fewer Code Blocks for Exact Repair}
\date{\vspace{-5ex}}
\begin{document}
\maketitle
\begin{abstract}
$\epsilon$-Minimum Storage Regenerating ($\epsilon$-MSR) codes form a special class of Maximum Distance Separable (MDS) codes, providing mechanisms for exact regeneration of a single code block in their codewords by downloading slighly sub-optimal amount of information from the remaining code blocks. The key advantage of these codes is a significantly lower sub-packetization that grows only logarithmically with the length of the code, while providing optimality in storage and error-correcting capacity. However, from an implementation point of view, these codes require each remaining code block to be available for the repair of any single code block. In this paper, we address this issue by constructing $\epsilon$-MSR codes that can repair a failed code block by contacting a fewer number of available code blocks. When a code block fails, our repair procedure needs to contact a few compulsory code blocks and is free to choose any subset of available code blocks for the remaining choices. Further, our construction requiresa field size linear in code length and ensures load balancing among the contacted code blocks in terms of information downloaded from them for a single repair.    
\end{abstract}
\section{Introduction}
\paragraph{} Designing error-correcting codes for distributed storage systems has evolved into an important area of research with both theoretical and practical challenges. In a typical setup, a coding scheme encodes a file of $k$ symbols over a finite field $\mathbb{F}$ (message) into a codeword comprised of $n$ symbols. These $n$ symbols are then stored on $n$ distinct storage nodes. When a single node fails, it is a repaired by \emph{regenerating} the symbol at that node from information received from data stored at the remaining $n-1$ (or fewer) \emph{intact} nodes.

\paragraph{} Low cost mechanisms to exactly regenerate the code symbols at failed nodes to replenish the redundancy are essential for sustained applicability of error correcting codes to distributed storage systems. Such low cost mechanisms also enable efficient access to the data stored at a temporarily unavailable node with the help of data stored at the remaining available nodes. An important cost metric -- introduced by Dimakis et. al (\cite{DGWR07},\cite{DGWWR10}) -- is the \emph{repair bandwidth} of a repair algorithm and is defined to be the maximum amount of data downloaded from any helper node to repair a failed node. 

\paragraph{} Vector Maximum Distance Separable (MDS) codes are often preferred for applications in distributed storage systems. In the typical scenario mentioned above, each code symbol (over $\mathbb{F}$) of a codeword is viewed as a vector of length $\ell$ over a subfield $\mathbb{B}$ of $\mathbb{F}$. The parameter $\ell$ is called the \emph{subpacketization level} or \emph{node size}. The cut-set lower bound~\cite{DGWR07} shows that the \emph{repair bandwidth} is lower bounded by	
\[ \left(\dfrac{t}{t-k+1}\right) \cdot \ell \;\;\; \text{ symbols (over} \; \mathbb{B} \text{),} \]
 assuming that the underlying code is MDS and $t$ of the $n-1$ nodes are contacted to repair a single code block. Another important consideration is \emph{load balancing} -- downloading the same amount of data from each of the $t$ helper nodes. MDS codes that achieve optimal repair bandwidth and load balancing are called \emph{Minimum Storage Regenerating} (MSR) codes in literature. They download exactly $\frac{\ell}{t-k+1}$ symbols from each of the $t$ helper nodes to repair a single failed node. A useful relaxed notion called $\epsilon$-MSR codes was defined in~\cite{RTGE17}:

\begin{definition}[$\epsilon$-MSR code] \label{Definition:emsr}
Let $\mathcal{C}$ be an $[n, k \ell, d_{\min} = n - k + 1, \ell]_{\mathbb{B}}$ vector MDS code. \footnote{See Section~\ref{subsection:vc} for the definition of vector MDS codes.} For $\epsilon > 0$, we say that the code $\mathcal{C}$ is an $\epsilon$-MSR code if there is a repair algorithm with repair bandwidth at most $(1 + \epsilon) \cdot \frac{\ell}{n - k}$ symbols (over $\mathbb{B}$).
\end{definition}

\begin{remark}
An $\epsilon$-MSR code with $\epsilon = 0$ is simply an MSR code.
\end{remark}

\paragraph{} We briefly describe the repair procedure for an MSR code here. Let's start by assuming that any single code block $\mathbf{c}_i$ fails and we need to repair it by downloading a small amount of data from the remaining code blocks $\{\mathbf{c}_{j}\}_{j \neq i}$. By setting a parameter $t: k \le t \le n - 1$, for every $i \in [n]$ and $\mathcal{R} \subseteq [n] \setminus \{i\}$ with $|\mathcal{R}| = t$, we have a collection of functions $\{h^{(i)}_{j, \mathcal{R}} : \mathbb{B}^{\ell} \mapsto \mathbb{B}^{\beta_{j, i}}\}_{j \in \mathcal{R}}$ such that $\mathbf{c}_{i}$ is a function of the symbols in the set $\{h^{(i)}_{j, \mathcal{R}} (\mathbf{c}_j)\}_{j \in \mathcal{R}}$. This implies that for every $i \in [n]$, the code block $\mathbf{c}_{i}$ can be \emph{exactly repaired} by contacting any $t$ out of $n - 1$ remaining code blocks in the codeword $\mathbf{c}$ (say indexed by the set $\mathcal{R} \subseteq [n] \setminus \{i\}$) and downloading at most $\sum_{j \in \mathcal{R}}\beta_{j, i}$ symbols of $\mathbb{B}$ from the contacted code blocks. Here, $\beta_{j, i}$ denotes the number of symbols of $\mathbb{B}$ to be downloaded from the code block $\mathbf{c}_{j}$.
\subsection{Prior Work on MSR codes and $\epsilon$-MSR codes}
\paragraph{} Constructing MSR codes that achieve optimality in all parameters in the high-rate regime is an actively pursued question, leading to an astounding line of work in the last few years. This resulted in elegant constructions tuned for specific values of $n$, $k$ and $t$ (see for example \cite{PDC11},\cite{RSK11},\cite{TWB13}). For the sake of brevity, we stick to the papers that are closely related to our paper. Ye and Barg in a recent work~\cite{YB17} constructed MSR codes that meet the optimal repair bandwidth for all values of $n$, $k$ and a fixed value $t : k \le t < n - 1$, with a sub-packetization level $\ell = (t - k + 1)^{n}$. In a follow up work, the same authors~\cite{YB16} improved $\ell$ to $(n - k)^{\frac{n}{n - k}}$ for the case when $t = n - 1$. The same result was obtained independently in~\cite{SVK16} using a coupled-layer construction. Futher, the same authors extended their results to $t < n - 1$ in~\cite{SVK17} with $\ell = (t - k + 1)^{\frac{n}{t - k + 1}}$ but their construction can only perform exact repair for a specific set of nodes. We would mention that these are the best sub-packetization levels known so far for MSR codes and a work by~\cite{GTC14} shows that an MSR code that employs linear repair schemes satisfies the following bound on its sub-packetization level:
\[k \le 2 (\log_{2}\ell) \left(\log_{\frac{n - k}{n - k - 1}} \ell + 1\right) + 1.\]

\paragraph{} To see the motivation behind $\epsilon$-MSR codes, consider a code with large sub-packetization level say $\ell \ge (n - k)^{\frac{n}{n - k}} = 2^{n/2}$. Note that this implies using storage nodes with capacity $\ell$ symbols (over $\mathbb{B}$), one can only design a storage system with only at most $2 \log_{2}{\ell}$ nodes. Thus, a larger sub-packetization level can lead to a reduced design space in terms of various system parameters. We refer the reader to~\cite{RTGE17J} for a detailed account of benefits provided by MSR codes with small sub-packetization. In order to address this issue, the authors of~\cite{GR17} introduced a variation, where they relaxed the notion that MSR code should achieve the optimal repair bandwidth. Interestingly, they observed that losing an additional factor of $\epsilon$ in the repair-bandwidth leads to significant gains in terms of subpacketization i.e. $\ell = (n - k)^{\frac{1}{\epsilon}}$ but their construction suffered from a large alphabet size and was not explicit. They addressed this issue in a followup work~\cite{RTGE17}, presenting an explicit construction and introduced the notion of $\epsilon$-MSR codes. Further, the sub-packetization of their code grows only logarithmically with the length. Moreover, they introduced a general framework for constructing $\epsilon$-MSR codes, which involves combining an existing MSR code with another linear code of large distance. We work with this framework throughout the paper.  We would like to mention here that~\cite{RTGE17} and~\cite{GR17} only consider the $t = n - 1$ case.

\subsection{Our Contributions}
\paragraph{} In this paper, we construct $\epsilon$-MSR codes that contact a fewer number of helper nodes, answering an important open question posed in~\cite{RTGE17},~\cite{RTGE17J} and~\cite{GR17}. To this end, we first introduce the notion of $(\mathcal{T}, \mathcal{T}')$-repair property for $\epsilon$-MSR codes.

\begin{definition} [$(\mathcal{T}, \mathcal{T}')$-repair property] \label{Definition:torp}
For a fixed value of $k \le \mathcal{T} < n - 1$, an $\epsilon$-MSR code $\mathcal{C} = [n, k\ell, d_{\min} = n - k + 1,\ell = s^{n}]$ with $s = \mathcal{T} - k + 1$ has the {\em $(\mathcal{T}, \mathcal{T'})$-repair property} if and only if the exact repair of any node $i \in [n]$ can be accomplished by receiving at most $(1 + \epsilon) \cdot \frac{\ell}{s}$ symbols (over $\mathbb{B}$) from two types of helper nodes: $\mathcal{T}'$ \emph{compulsory} nodes that always need to be chosen and the remaining $\mathcal{T} - \mathcal{T}'$ nodes can be chosen arbitrarily.   \end{definition}

\paragraph{} Our construction satisfies this property and is an instantiation of the $\epsilon$-MSR code framework introduced in~\cite{RTGE17}. At a high level, their framework combines an \emph{inner} MSR code with an \emph{outer} linear code of large distance in order to obtain an $\epsilon$-MSR code. We choose the inner MSR code to be Construction \rom{4} in~\cite{YB17} that satisfies the $t$-optimal repair property. Further, we choose the outer code to be an Algebraic Geometry (AG) code with specific parameters that minimizes the number of compulsory nodes (see Theorem~\ref{theorem:ag1}). Finally, we obtain an $\epsilon$-MSR code with $(\mathcal{T}, \mathcal{T}')$-repair property whose sub-packetization grows only {\em logarithmically} with the code length and has a constant number of parity symbols (see Theorem~\ref{Lemma:inffamily}). This amounts to a doubly-exponential saving in terms of the sub-packetization level compared to the existing MSR codes with $t$-optimal repair property (\cite{SVK17},\cite{YB17}). Similar to these two codes, our constructions are over an alphabet linear in the length of the code, can be generalized to an infinite code family and ensure load balancing among the contacted nodes.
\subsection{Organization}
\paragraph{} In Section~\rom{2}, we present the neccessary background on \emph{MDS vector codes} along with a parity-check view of them. Further, we recap the $\epsilon$-MSR code framework defined in~\cite{RTGE17}, which we use for our construction. We present the main result of this paper in Section~\rom{3} giving an explicit family of $\epsilon$-MSR codes that contacts a fewer number of helper nodes. We conclude the paper in Section~\rom{4} with directions for future work.
\section{Preliminaries and Notation}
\subsection{Vector MDS codes} \label{subsection:vc}
\paragraph{} We consider a set $\mathcal{C} \subseteq \mathbb{F}^{n}$ over a finite field $\mathbb{F}$. We say that $\mathcal{C}$ forms an $(n, M, d_{min})_{\mathbb{F}}$ code if we have $|\mathcal{C}| = M$ and $d_{min} = \min_{\mathbf{c} \neq \mathbf{c'} \in \mathcal{C}} d_{H}(\mathbf{c}, \mathbf{c'})$, where $d_{H}(., .)$ denotes the Hamming distance.

\paragraph{} In this paper, we consider $\mathbb{F}$ to be a degree-$\ell$ extension of a subfield $\mathbb{B}$. Hence, each element of $\mathbb{F}$ can be represented as an $\ell$-length vector over $\mathbb{B}$. It follows that a codeword $\mathbf{c} = (c_{1}, \dots, c_{n}) \in \mathcal{C} \subseteq \mathbb{F}^{n}$ can be represented as an $n\ell$-length vector $\mathbf{c} = (\mathbf{c}_{1}, \dots, \mathbf{c}_{n}) \in \mathbf{B}^{n\ell}$. Here, for $i \in [n]$, the \emph{code block} $\mathbf{c}_{i} = (c_{i, 1}, \dots, c_{i, \ell}) \in \mathbb{B}^{\ell}$ denotes the $\ell$-length vector corresponding to the code symbol $c_{i} \in \mathbb{F}$. In this setting, we say $\mathcal{C}$ is a \emph{vector linear code}; it forms a $\log_{|\mathbb{B}|}M$-dimensional subspace of $\mathbb{B}^{n\ell}$ (over $\mathbb{B}$). We say that an $[n, \log_{|\mathbb{B}|}{M}, d_{\min}, \ell]_{\mathbb{B}}$ \emph{vector linear code} is a \emph{vector MDS code} if $\ell$ divides $\log_{|\mathbb{B}|}M$ and $d_{\min} = n - \frac{\log_{|\mathbb{B}|}M}{\ell} + 1$.
\begin{remark}
Note that even though we view the codewords of an array code as $n \ell$-length vectors over $\mathbb{B}$, the minimum distance is calculated by viewing each code block as a symbol over $\mathbb{F}$. Therefore, the minimum distance of the array code belongs to the set of integers $[n] = \{1, \dots, n\}$.
\end{remark}
\paragraph{} A $[n, \log_{|\mathbb{B}|}M, d_{\min}, \ell]_{\mathbb{B}}$ \emph{vector linear code} can be defined by an $(n \ell - \log_{|\mathbb{B}|}M) \times n \ell$ full rank matrix $\mathbf{H}$ over $\mathbb{B}$ as follows:
\begin{equation}
\mathcal{C} = \{\mathbf{c} = (\mathbf{c}_{1}, \dots, \mathbf{c}_{n}) : \mathbf{H} \cdot \mathbf{c} = 0\} \subseteq \mathbb{B}^{n \ell}.
\end{equation}
\paragraph{} The matrix $\mathbf{H}$ is called the \emph{parity check matrix} of the code $\mathcal{C}$. Assuming that $k$ is an integer such that $\log_{|\mathbb{B}|} M = k \ell$, we can view $\mathbf{H}$ as a block matrix
\begin{align*}
\mathbf{H} = \begin{pmatrix}
\mathbf{H}_{1} & \mathbf{H}_{2} & \dots & \mathbf{H}_{n}
\end{pmatrix}
\in \mathbb{B}^{(n - k)\ell \times n \ell}.
\end{align*}
\paragraph{} For $i \in [n]$, we refer to the $(n - k) \ell \times \ell$ sub-matrix $\mathbf{H}_{i}$ as the \emph{thick column} associated with the $i$-th code block in the codewords of $\mathcal{C}$. For a set $S = \{i_1, i_2, \dots, i_{|S|} \subseteq [n]\}$, we define the $(n - k)\ell \times |S|\ell$ matrix $H_{S}$ as follows,
\begin{align*}
\mathbf{H}_{S} = \begin{pmatrix}
\mathbf{H}_{i_1} & \mathbf{H}_{i_2} & \dots & \mathbf{H}_{i_{|S|}} 
\end{pmatrix}
\in \mathbb{B}^{(n - k) \ell \times |S| \ell}.
\end{align*}
\paragraph{} Note that the matrix $\mathbf{H}_{S}$ comprises the thick columns with indices in the set $S$. The parity-check matrix $\mathbf{H}$ defines a \emph{vector MDS code} if for every $S \subseteq [n]$ with $|S| = n - k$, the $(n - k)\ell \times (n - k)\ell$ sub-matrix $\mathbf{H}_{S}$ is full rank. 
\subsection{$\epsilon$-MSR code framework}
\paragraph{} In this section, we describe the framework for constructing $\epsilon$-MSR codes defined in~\cite{RTGE17}.
\begin{construction} \label{Construction:main}
	We are given two codes:
	\begin{itemize}
		\item{An $(n, r = n - k, t, \ell)_{\mathbb{B}}$ MSR code $\mathcal{C}^{I}$ with length $n$, number of message symbols $k$, number of parity symbols $r$, number of helper nodes $t$ to be contacted for exact repair of any single node and sub-packetization $\ell$ is defined by the parity-check matrix
			\begin{equation} \label{eq:main_pc1}
			\mathbf{H} =	
			\begin{pmatrix}
			H_{1, 1} & H_{1, 2} & \dots & H_{1, n} \\
			H_{2, 1} & H_{2, 2} & \dots & H_{2, n}  \\
			\vdots & \vdots & \vdots & \vdots \\
			H_{r, 1}& H_{r, 2} & \dots & H_{r, n} \end{pmatrix}
			\in \mathbb{B}^{r\ell \times n\ell}.
			\end{equation}
		}
		\item{A $(N, K, M, D = \delta N)_{q}$ linear code $\mathcal{C}^{II}$ with length $N$, size $M$ and distance $\delta N$ over the alphabet $q \le n$.}
	\end{itemize}
	The $(\mathcal{N} = M, (M - r)N\ell, \mathcal{T}, \mathcal{L} = N\ell)_{\mathbb{B}}$ $\epsilon$-MSR code $\mathcal{C} = \mathcal{C}^{II} \circ \mathcal{C}^{I}$ is constructed by designing its $rN\ell \times MN\ell$ parity check matrix $\mathcal{H}$. Notice that every codeword of $\mathcal{C}$ comprises of $M  = |\mathcal{C}^{II}|$ code blocks with each of these blocks containing $N\ell$ symbols (over $\mathbb{B}$). In particular, this implies that the $M$ code blocks in a codeword of $\mathcal{C}$ are indexed by $M$ distinct $N$-length codewords in $\mathcal{C}^{II}$. Let $\mathbf{a}_{i} = (a_{i, 1}, \dots, a_{i, N}) \in q^N$ be a codeword of $\mathcal{C}^{II}$. Then, the $N\ell$ columns of the parity check matrix $\mathcal{H}$ that correspond to the code block of a codeword of $\mathcal{C}$ indexed by $\mathbf{a}_{i} \in \mathcal{C}^{II}$ are defined as follows:
	\begin{equation} \label{eq:main_pc2}
	\mathcal{H}_{i} =
	\begin{bmatrix}
	Diag(H_{1, a_{i, 1}}, \dots, H_{1, a_{i, N}}) \\
	\sigma_{i} \cdot Diag(H_{2, a_{i, 1}}, \dots, H_{2, a_{i, N}})  \\
	\vdots \\
	\sigma_{i}^{r - 1} \cdot Diag(H_{r, a_{i, 1}}, \dots, H_{r, a_{i, N}}) \\
	\end{bmatrix},
	\end{equation}
	where $\{\sigma_{i}\}_{i \in [M]}$ are non-zero distinct elements suitably chosen from $\mathbb{B}$. It follows that all the blocks \\ $\{H_{j, a_{i, k}}\}_{j \in [r], i \in [M], k \in [N]}$ in the above equation are well-defined since $q \le n$.
\end{construction}	
\section{$\epsilon$-MSR code construction with $(\mathcal{T}, \mathcal{T}')$-repair property}
\paragraph{} In this section, we present an explicit construction of an $\epsilon$-MSR code having the $(\mathcal{T}, \mathcal{T}')$-repair property. We start by describing our construction, which leverages the $\epsilon$-MSR code framework defined earlier. Then, we explain our repair procedure in Claim~\ref{Claim:single-node-repair}, which we use to argue that our code is an $\epsilon$-MSR code and satisifes the $(\mathcal{T}, \mathcal{T}')$-repair property. We then record our main result in Theorem~\ref{Lemma:inffamily}, presenting an infinite family of such codes. Finally, we establish that $\mathcal{C}$ is a MDS code in Appendix~\ref{LemmaRef:d11}.
\begin{construction} \label{d-opt-ag}
The $(\mathcal{N} = M, (M - r)N \ell, \mathcal{T} = M - n + t, \mathcal{L} = N \ell)_{\mathbb{B}}$ $\epsilon$-MSR code with the $(\mathcal{T}, \mathcal{T}')$-repair property is constructed using Construction~\ref{Construction:main} by choosing:
\begin{itemize}
\item {The code $\mathcal{C}^{I}$ to be the $(n, r = n - k, k \le t < n - 1, \ell = s^{n})_{\mathbb{B}}$-MSR code having $t$-optimal repair property (where $s = t - k + 1$) for a fixed value of $t$, defined by
\begin{equation} \label{eq:ybs-pc2}
H_{j, i} = H_{i}^{j - 1}, j \in [r], i \in [n].
\end{equation}
Here, $H_{i} = \sum_{b = 0}^{\ell - 1}\lambda_{i, b_i} \cdot e_b \cdot e_{b}^{T}$ for every $i \in [n]$. Note that $\mathcal{C}^{I}$ can be constructed over a finite field of size $|F| \ge sn$ and $\{e_b : b \in [0, \ell - 1]\}$ is the standard basis of $F^{\ell}$ over $F$. Further, $b_j$ is the $j$-th digit from the right in the representation of $b$ in the $s$-ary form, $b = (b_n, b_{n  -1}, \dots, b_1)$. (Details in Appendix~\ref{Ye-Barg-d-optimal}.)}
\item {The code $\mathcal{C}^{II}$ to be a $(N, K, M = q^{K}, D = \delta N)_{q}$-linear code with constant alphabet size $r < q \le n$, constant relative distance $\delta$ and the number of codewords with Hamming Weight $N$ is at least $q^{\frac{u - 1}{u} K}$ for any $u > 3$. We provide an explicit family of such codes in Theorem~\ref{theorem:ag1}.}
\end{itemize}
\end{construction}
\paragraph{} From the above Construction, the $M$ code blocks in any codeword of $\mathcal{C}$ are indexed by $M$ distinct codewords in the code $\mathcal{C}^{II}$. Let's assume that $\mathcal{C} = (c^{1}, c^{2}, \dots, c^{M})$, where $c^{i} = (c^{i}_{1}, c^{i}_{2}, \dots, c^{i}_{N})$ for every $i \in [M]$ and $c^{i}_{j} = (c^{i}_{j, 1}, \dots, c^{i}_{j, \ell})$ for every $j \in [N]$.  Without loss of generality, we assume that the first code block $c^{1} \in \mathcal{C}$ fails and it is indexed by the codeword $\mathbf{a}_{1} = (a_{1, 1}, \dots, a_{1, N}) \in \mathcal{C}^{II}$. 

\paragraph{} For any codeword $\mathbf{a}_{i} = (a_{i, 1}, a_{i, 2}, \dots, a_{i, N}) \in \mathcal{C}^{II}, i \in [2, M]$, the corresponding parity check column $H_{i}$ of $\mathcal{C}$ indexed by $\mathbf{a}_{i}$ is given by~\eqref{eq:main_pc2}. Since all the parity check columns are block diagonal matrices, we first repair $c^{1}_{1} \in c^{1}$  and by symmetry, we can repair all $c^{1}_{j} : j \in [2, N]$ in a similar fashion. We start with the following claim on the repair bandwidth of $c^{1}_{1}$: 
\begin{claim} \label{Claim:single-node-repair}
The code block $c^{1}_{1}$ can be repaired by downloading
\[\left (\frac{M}{q} - 1 \right )\ell + \left (\mathcal{T} - \left (\frac{M}{q} - 1 \right ) \right )\dfrac{\ell}{s}\]
symbols (over $\mathbb{B}$) from $\mathcal{T}$ code blocks $c^{i}_{1}, i \in [2, M]$. In particular, for every $i \in [2, M]$, the $\frac{M}{q} - 1$ code blocks $c^{i}_{1}$, whose corresponding index $a_{i, 1}$ equals $a_{1, 1}$ need to be contacted compulsorily. The remaining $\mathcal{T} - \left(\frac{M}{q} - 1\right)$ code blocks can be chosen arbitrarily.
\end{claim}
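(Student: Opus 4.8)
The plan is to exploit the block-diagonal structure of the parity-check columns $\mathcal{H}_i$ together with the $t$-optimal repair property of the inner Ye--Barg code $\mathcal{C}^I$. Fix the failed block $c^1_1$; we must recover its $\ell$ coordinates over $\mathbb{B}$ from downloads taken from the first sub-blocks $c^i_1$ of the surviving blocks $i\in[2,M]$. Because every $\mathcal{H}_i$ is block diagonal across the $N$ coordinates $j\in[N]$, the parity-check equations $\mathcal{H}\cdot \mathbf{c}=0$ decouple coordinate-by-coordinate: the equations involving the $j=1$ sub-blocks form, by~\eqref{eq:main_pc2} and~\eqref{eq:ybs-pc2}, a system whose coefficient on $c^i_1$ is the stacked matrix $\bigl(\sigma_i^{\,m-1} H_{a_{i,1}}^{\,m-1}\bigr)_{m\in[r]}$. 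This is exactly the parity-check structure of a Ye--Barg-type MDS code of length $M$ over node size $\ell$, but now with the diagonal matrix at position $i$ equal to $\sigma_i H_{a_{i,1}}$ rather than $H_i$.

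First I would separate the surviving blocks according to the value of $a_{i,1}\in[q]$. Since $\mathcal{C}^{II}$ has length $N$ and the symbols in coordinate $1$ take $q$ values, and (as the claim asserts) there are $M/q$ codewords of $\mathcal{C}^{II}$ with each fixed value $a_{i,1}$, the surviving blocks split into $q$ groups of size $M/q$ (one group, of size $M/q-1$ after removing the failed block, shares the value $a_{1,1}$ with the failed block). For the $M/q-1$ blocks in the failed block's own group, the diagonal matrix $H_{a_{i,1}}=H_{a_{1,1}}$ coincides with the one attached to $c^1_1$; these are the \emph{compulsory} helpers, and from each of them I would download the full $\ell$ symbols of $c^i_1$ — this accounts for the $\bigl(\tfrac{M}{q}-1\bigr)\ell$ term. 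For each of the remaining $q-1$ groups, the $M/q$ blocks carry a common matrix $H_b$ with $b\neq a_{1,1}$; here I would invoke the inner code's $t$-optimal linear repair scheme. Concretely, the $t$-optimal property of Construction IV of~\cite{YB17} supplies, for the failed index $a_{1,1}$ and any set of $t$ other indices, linear functionals $\{h_{b}\colon\mathbb{B}^\ell\to\mathbb{B}^{\ell/s}\}$ such that the corresponding syndrome components determine $c^1_1$; the scaling by the distinct nonzero $\sigma_i$ only rescales columns and does not affect rank conditions, so the same repair functionals apply. From each block in these groups one downloads $\ell/s$ symbols, contributing the $\bigl(\mathcal{T}-(\tfrac{M}{q}-1)\bigr)\tfrac{\ell}{s}$ term; the count $\mathcal{T}-(\tfrac{M}{q}-1)$ of such blocks can be distributed arbitrarily over the $q-1$ non-compulsory groups as long as enough are taken to trigger the inner scheme.

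The key algebraic step is verifying that, after collecting the $\ell$ full symbols from the compulsory group and the $\ell/s$-sized projections from the chosen arbitrary blocks, the resulting linear system in the unknown $c^1_1$ has a unique solution. For this I would write the combined syndrome in the decoupled $j=1$ system, substitute the known contributions, and reduce to showing that a certain $r\ell\times\ell$ matrix built from $\{\sigma_1^{m-1}H_{a_{1,1}}^{m-1}\}_{m\in[r]}$ and the repair functionals has full column rank $\ell$. This is precisely where the inner code's $t$-optimal repair guarantee is used, essentially verbatim: for node size $\ell=s^n$, the Ye--Barg functionals depend only on the relative positions of the $s$-ary digits of the indices, and I expect the compulsory "same-matrix" helpers to supply exactly the subspace complement that the $t$-optimal scheme needs from within its own group. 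The main obstacle, and the part deserving the most care in the full write-up, is exactly this rank verification — confirming that the $t$-optimal repair subspaces for the inner length-$n$ code, when lifted to the length-$M$ outer structure, remain compatible after the $\sigma_i$-rescaling and after restricting the helper set to $\mathcal{T}$ blocks with the prescribed group structure, and that the resulting bandwidth $\bigl(\tfrac{M}{q}-1\bigr)\ell+\bigl(\mathcal{T}-(\tfrac{M}{q}-1)\bigr)\tfrac{\ell}{s}$ indeed meets the $(1+\epsilon)\tfrac{\mathcal{L}}{s}$ bound for the claimed $\epsilon$, which will follow by bounding $M/q$ using the relative distance $\delta$ of $\mathcal{C}^{II}$ and the lower bound on the number of full-weight codewords.
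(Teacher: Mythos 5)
Your high-level plan matches the paper's: decouple the parity constraints by outer coordinate $j$ (using the block-diagonal structure of $\mathcal{H}_i$), split the surviving blocks at $j=1$ into the compulsory set $Q=\{i:a_{i,1}=a_{1,1}\}$ of size $M/q-1$ (from which all $\ell$ coordinates are downloaded) and the set $V$ (from which $\ell/s$ aggregated symbols are downloaded from $|V|-(r-s)$ arbitrarily chosen blocks), and add up the two contributions to get the claimed bandwidth. The count $|Q|=M/q-1$ and the equipartition into $q$ groups also match, modulo the paper's explicit observation that no coordinate of $\mathcal{C}^{II}$ is identically zero.

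Where the proposal has a real gap is the central rank verification. You claim the inner code's $t$-optimal repair functionals apply \emph{essentially verbatim} because ``the scaling by the distinct nonzero $\sigma_i$ only rescales columns and does not affect rank conditions.'' That is not what is happening: the coefficient on $c^i_1$ in the $m$-th parity row is $\sigma_i^{m-1}H_{a_{i,1}}^{m-1}=(\sigma_iH_{a_{i,1}})^{m-1}$, so the effective diagonal entry at position $b$ is $\sigma_i\lambda_{a_{i,1},b_{a_{i,1}}}$; the $\sigma_i$ enter \emph{multiplicatively inside the power structure}, not as a uniform column scaling. Consequently the annihilating polynomial must be built as $p_0(x)=\prod_{u=0}^{s-1}\bigl(x-\sigma_1\lambda_{a_{1,1},u}\bigr)$ --- it depends on $\sigma_1$, not just on the inner code's $\lambda$'s --- and one must re-derive that (i) $PL_1=0$, (ii) $PL_2$ is a fully-known vector once all $s$ coordinates of each $c^{q_i}_1$, $q_i\in Q$, are downloaded, and (iii) $PL_3$ is Vandermonde-like in the distinct quantities $\sigma_{v_i}\lambda_{a_{v_i,1},b_{a_{v_i,1}}}$ with nonvanishing weights $p_0(\cdot)$, hence every $(r-s)\times(r-s)$ minor is invertible. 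This is a genuinely modified interpolation argument, not a verbatim application of the Ye--Barg functionals. Relatedly, your explanation of why the $Q$ blocks are compulsory (``same-matrix helpers supply the subspace complement the $t$-optimal scheme needs from within its own group'') is not the right mechanism: the paper downloads all of $c^{q_i}_1$ precisely because $\sigma_{q_i}\neq\sigma_1$, so $p_0(\sigma_{q_i}\lambda_{a_{1,1},k})\neq 0$ and these terms cannot be annihilated or aggregated; they must be supplied in full so that $PL_2$ becomes a known constant. Flagging this as ``the main obstacle'' is fair, but as written the proposed reduction would not go through; you need to construct the $\sigma$-aware polynomial and run the interpolation argument for the outer code directly.
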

\begin{proof}
To this end, consider the $M\ell$ columns of $\mathcal{H}$ that correspond to $a_{i, 1} : i \in [M]$:
\begin{equation} \label{eq:ourd-pc4}
\begin{bmatrix}
H_{1, a_{1, 1}} & H_{1, a_{2, 1}} & \dots & H_{1, a_{M, 1}}\\
\sigma_{1} H_{2, a_{1, 1}} & \sigma_{2} H_{2, a_{2, 1}} & \dots & \sigma_{M} H_{2, a_{M, 1}}\\
\vdots & \vdots & \vdots & \vdots \\
\sigma_{1}^{r - 1} H_{r, a_{1, 1}} & \sigma_{2}^{r - 1} H_{r, a_{2, 1}} & \dots & \sigma_{M}^{r - 1} H_{r, a_{M, 1}}
\end{bmatrix}.
\end{equation}
Using our definition of $\mathcal{C}$, we can write the $r\ell$ parity check equations corresponding to $a_{i, 1}: i \in [M]$ using~\eqref{eq:ourd-pc4} as follows:
\begin{equation} \label{eq:ourd-pc5}
\sum_{i = 1}^{i = M} \sigma_{i}^{j - 1} \cdot H_{j, a_{i, 1}} \cdot c^{i}_{1} = 0 \text{ for every } j \in [r].
\end{equation}
We now build two sets $Q$, $V$ and a tuple $\Gamma$, where $Q = \{i : a_{i, 1} = a_{1, 1}, i \in [2, M]\}$, $V = \{i : a_{i, 1} \neq a_{1, 1}, i \in [2, M]\}$ and $\Gamma = (a_{i, 1} : a_{i, 1} \neq a_{1, 1}, i \in [2, M])$. From our construction of $\mathcal{C}^{II}$, it follows that $|Q| = \frac{M}{q} - 1$, $|V| = M - \frac{M}{q}$ and for each $\gamma \in \Gamma$, we have $1 \le \gamma \le n$. (The value of $|Q|$ follows from the observation that there exists no $j \in [N]$ such that $a_{i, j} = 0$ for every $i \in [M]$, which in turn follows from the construction of $\mathcal{C}^{II}$.) 
Let $Q = \{q_i\}_{i \in [|Q|]}$, $V = \{v_i\}_{i \in [|V|]}$ and $\Gamma = (\gamma_i)_{i \in [|\Gamma|]}$. Based on this classification, we can write~\eqref{eq:ourd-pc5} as
\begin{align*}
\sigma_{1}^{j - 1} \cdot H_{j, a_{1, 1}} \cdot c^{1}_{1} + \sum_{q_i \in Q} \sigma_{q_i}^{j - 1} \cdot H_{j, a_{q_i, 1}} \cdot c^{q_i}_{1} + \sum_{v_i \in V} \sigma_{v_i}^{j - 1} \cdot H_{j, a_{v_i, 1}} \cdot c^{v_i}_{1} = 0 \text{ for every } j \in [r].
\end{align*}
We first recover the $s$ symbols $(c^{1}_{1, b(a_{1, 1}, k)})_{k \in [0, s - 1]}$ for a fixed value of $b$ in the range $[0, \ell - 1]$. Note that $b$ is a $s$-ary representation of length $n$ and $b(i, u) = (b_{n}, b_{n - 1}, \dots, b_{i + 1}, u, b_{i - 1}, \dots, b_{1})$. Consider the $rs$ parity check equations corresponding to these $s$ symbols, given by
\begin{align*} 
\sigma_{1}^{j - 1} \lambda_{a_{1, 1}, k}^{j - 1} c^{1}_{1, b(a_{1, 1}, k)} + \sum_{q_{i} \in Q}  \sigma_{q_i}^{j - 1} \lambda_{a_{1, 1}, k}^{j - 1} c^{q_i}_{1, b(a_{1, 1}, k)}  + \sum_{i = 1}^{|V|} \sigma_{v_i}^{j - 1}  \lambda_{a_{v_{i, 1}}, b_{\gamma_{i}}}^{j - 1} c^{v_i}_{1, b(a_{1, 1}, k)} = 0 \textbf{\space} \forall  j \in [r], k \in [0, s - 1], \tag{5.5} \label{eq:ourd-pc7}
\end{align*} 
following from the definition of $\{H_{i, j}\}_{i \in [n], j \in [r]}$ in Construction~\ref{d-opt-ag}. We sum the equation~\eqref{eq:ourd-pc7} over all values of $k \in [0, s - 1]$ to get
\begin{align*} 
\sum_{k= 0}^{s - 1} \sigma_{1}^{j - 1} \lambda_{a_{1, 1}, k}^{j - 1} c^{1}_{1, b(a_{1, 1}, k)} + \sum_{k = 0}^{s - 1} \sum_{q_{i} \in Q}  \sigma_{q_i}^{j - 1} \lambda_{a_{1, 1}, k}^{j - 1} c^{q_i}_{1, b(a_{1, 1}, k)} + \sum_{k = 0}^{s - 1} \sum_{i = 1}^{|V|} \sigma_{v_i}^{j - 1} \lambda_{a_{v_{i, 1}}, b_{\gamma_{i}}}^{j - 1} c^{v_i}_{1, b(a_{1, 1}, k)} = 0 \text{\space} \forall  j \in [r].
\end{align*}
Propagating the sum inside, we get
\begin{align*} 
\sigma_{1}^{j - 1} \sum_{k= 0}^{s - 1}  \lambda_{a_{1, 1}, k}^{j - 1} c^{1}_{1, b(a_{1, 1}, k)} + \sum_{q_{i} \in Q}  \sigma_{q_i}^{j - 1} \sum_{k= 0}^{s - 1} \lambda_{a_{1, 1}, k}^{j - 1} c^{q_i}_{1, b(a_{1, 1}, k)} + \sum_{i = 1}^{|V|} \sigma_{v_i}^{j - 1} \lambda_{a_{v_{i, 1}}, b_{\gamma_{i}}}^{j - 1} \sum_{k= 0}^{s - 1}  c^{v_i}_{1, b(a_{1, 1}, k)} = 0 \text{\space} \forall j \in [r].
\end{align*}
For every $v_i \in V$, let $\mu_{v_i, 1, 1}^{(b)} := \sum_{k = 0}^{s - 1} c^{v_i}_{1, b(a_{1, 1}, k)}$. We can substitute this in the above equation to get
\begin{align*} 
\sigma_{1}^{j - 1} \sum_{k= 0}^{s - 1}  \lambda_{a_{1, 1}, k}^{j - 1} c^{1}_{1, b(a_{1, 1}, k)} + \sum_{q_{i} \in Q}  \sigma_{q_i}^{j - 1} \sum_{k= 0}^{s - 1}  \lambda_{a_{1, 1}, k}^{j - 1} c^{q_i}_{1, b(a_{1, 1}, k)} + \sum_{i = 1}^{|V|} \sigma_{v_i}^{j - 1} \lambda_{a_{v_{i, 1}}, b_{\gamma_{i}}}^{j - 1} \mu_{v_i, 1, 1}^{(b)} = 0 \text { for every } j \in [r].
\end{align*}
We can write the above $r$ equations in the following matrix form: 
\begin{equation} \label{eq:ourd-pc12}
L_1 + L_2 + L_3 = 0,
\end{equation}
where $L_{1} = E_{L_1} F_{L_1}$, $L_2  = \sum_{q_i \in Q} E_{q_i} F_{q_i}$ and $L_3 = E_{V} F_{V}$.
In particular, $E_{L_1} = (\sigma_{1}^{j - 1} \lambda_{a_{1, 1}, k}^{j -1})_{j \in [r], k \in [0, s - 1]}$, $F_{L_1} = (c^{1}_{1, b(a_{1,1}, k)})_{k \in [0, s - 1]}$, $E_{q_i} =   (\sigma_{q_i}^{j - 1} \lambda_{a_{1, 1}, k}^{j -1})_{j \in [r], k \in [0, s - 1]}$, $F_{q_i} = (c^{q_i}_{1, b(a_{1, 1}, k)})_{k \in [0, s - 1]}$, $E_{V} = (\sigma_{v_k}^{j - 1} \lambda_{\gamma_k, b_{\gamma_{k}}}^{j - 1})_{j \in [r], k \in [1, |V|]}$ and $F_{V} = (\mu^{(b)}_{v_{k}, 1 , 1})_{k \in [1, |V|]}$.

We use a polynomial interpolation argument on~\eqref{eq:ourd-pc12} inspired by the proof of Theorem $7$ in~\cite{YB17}. We provide an overview of our interpolation argument here (the details are in Appendix~\ref{LemmaRef:PI1}). Recall that our goal is to compute the number of symbols we need to download from $\{F_{q_i}\}_{q_i \in Q}$ and $F_{V}$ in order to recover $F_{L_1}$. We start by showing that $F_{V}$ can be recovered from any of its $|V| - (r - s)$ symbols (note that $|V| - (r - s) > 0$ since $q > r$). Then, we use $\{F_{q_i}\}_{q_i \in Q}$ and $F_{V}$ in order to recover $F_{L_1}$. From our argument, it follows that by downloading any $|V| - (r - s)$ symbols from $F_{V} = (\mu_{v_{k}, 1, 1}^{(b)})_{k \in [1, |V|]}$, we can recover the remaining $(r - s)$ symbols and thus the whole vector $F_{V}$. Moreover, we download all the $s$ symbols in $F_{q_i}$ for all $q_i \in Q$. Thus, to determine $F_{L_1}$ (i.e. recover $s$ symbols), we need to download a total of
\[|Q|s + (|V| - (r - s))\]
symbols (over $\mathbb{B}$). In particular, this implies that to recover all the $\ell$ symbols of $c^{1}_{1}$ in a similar fashion, we need to download a total of
\[|Q|\ell + (|V| - (r - s))\dfrac{\ell}{s}\]
symbols (over $\mathbb{B}$).  Notice that we need to compulsorily contact all the $\left(\dfrac{M}{q} -1\right)$ code blocks characterized by $Q$ and can choose any arbitrary subset of size $|V| - (r - s)$ for the remaining choices. Substituting the values of $|Q|$ and $|V|$ from our definition along with $r = n - k$ and $s = t - k + 1$, we can repair $c^{1}_{1}$ by downloading
\begin{align*}
\left (\dfrac{M}{q} - 1 \right ) \ell + \left (M - n + t - \left (\dfrac{M}{q} - 1 \right ) \right )\dfrac{\ell}{s}
\end{align*}
symbols (over $\mathbb{B}$). It follows that we contact $\mathcal{T} = M - n + t$ blocks in total, completing the proof of Claim~\ref{Claim:single-node-repair}.
\end{proof}
Using the above repair procedure, we now argue that $\mathcal{C}$ has near-optimal repair bandwidth.
\begin{lemma} \label{Lemma:dmain1}
Suppose $\delta \ge 1 - \frac{\epsilon}{r - 1}$. Then, $\mathcal{C}$ is an $\epsilon$-MSR code with $(\mathcal{T} = M - n + t, \mathcal{T}')$-repair property, where $\mathcal{T}' = M - |\{\mathbf{a} \in \mathcal{C}^{II} | \wt(\mathbf{a}) = N\}|$. 
\end{lemma}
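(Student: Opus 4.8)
The plan is to bootstrap Claim~\ref{Claim:single-node-repair} from one coordinate to the whole failed block $c^1$ while keeping the set of contacted helper blocks fixed. Since the parity-check matrix $\mathcal{H}$ is block-diagonal across the $N$ copies of $\mathbb{F}$ (see~\eqref{eq:main_pc2}), for every coordinate $j\in[N]$ the parity equations involving only $\{c^i_j\}_{i\in[M]}$ are exactly those treated in the proof of Claim~\ref{Claim:single-node-repair} (which is the case $j=1$), with $Q$ replaced by $Q_j:=\{i\in[2,M]:a_{i,j}=a_{1,j}\}$ and $V$ by $V_j:=\{i\in[2,M]:a_{i,j}\neq a_{1,j}\}$. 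Hence $c^1_j$ is recovered by downloading all $\ell$ symbols of $c^i_j$ for each $i\in Q_j$ and $\ell/s$ symbols of $c^i_j$ for each $i$ in an arbitrary subset of $V_j$ of size $|V_j|-(r-s)$. Set $\mathcal{Q}:=\bigcup_{j\in[N]}Q_j$ and fix a contact set $\mathcal{S}\subseteq[2,M]$ with $\mathcal{Q}\subseteq\mathcal{S}$ and $|\mathcal{S}|=\mathcal{T}$, the extra $\mathcal{T}-|\mathcal{Q}|$ indices chosen arbitrarily from $[2,M]\setminus\mathcal{Q}$. For each $j$, use $Q_j$ as the compulsory helpers and $\mathcal{S}\setminus Q_j$ as the free helpers: since $Q_j\subseteq\mathcal{Q}\subseteq\mathcal{S}$ one gets $\mathcal{S}\setminus Q_j\subseteq V_j$ and $|\mathcal{S}\setminus Q_j|=\mathcal{T}-|Q_j|$, and using $|V_j|=M-1-|Q_j|$ with the identity $\mathcal{T}=M-r+s-1$ (which follows from $n-t=r-s+1$) one checks $\mathcal{T}-|Q_j|=|V_j|-(r-s)$, exactly the number Claim~\ref{Claim:single-node-repair} requires. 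So this is a valid exact repair of $c^1$ contacting only the $\mathcal{T}$ blocks indexed by $\mathcal{S}$.

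Next I would count the compulsory blocks. An index $i\in[2,M]$ lies in $\mathcal{Q}$ iff $\mathbf{a}_i$ agrees with $\mathbf{a}_1$ in at least one coordinate, i.e.\ iff $\mathbf{a}_i-\mathbf{a}_1$ is not of full Hamming weight $N$. As $i$ ranges over $[2,M]$ and $\mathcal{C}^{II}$ is linear, $\mathbf{a}_i-\mathbf{a}_1$ ranges over $\mathcal{C}^{II}\setminus\{\mathbf{0}\}$, so
\[|\mathcal{Q}|=(M-1)-|\{\mathbf{a}\in\mathcal{C}^{II}:\wt(\mathbf{a})=N\}|\le M-|\{\mathbf{a}\in\mathcal{C}^{II}:\wt(\mathbf{a})=N\}|=\mathcal{T}',\]
which is the claimed bound on the number of compulsory nodes. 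I would also record $|\mathcal{Q}|\le\mathcal{T}$ (equivalently $|\{\mathbf{a}\in\mathcal{C}^{II}:\wt(\mathbf{a})=N\}|\ge n-t-1$), which holds because the construction of $\mathcal{C}^{II}$ guarantees at least $q^{(u-1)K/u}$ full-weight codewords; this is what makes the set $\mathcal{S}$ above well-defined.

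It remains to bound the number of symbols downloaded from each contacted block. A block $c^i$ with $i\in\mathcal{S}\setminus\mathcal{Q}$ is a free helper in every coordinate and thus contributes $\ell/s$ per coordinate, i.e.\ $N\ell/s=\mathcal{L}/s$ in total --- exactly the cut-set optimum for $\mathcal{T}$ helpers (the number of information blocks of $\mathcal{C}$ is $M-r$, so this optimum is $\mathcal{L}/(\mathcal{T}-(M-r)+1)=\mathcal{L}/(t-k+1)=\mathcal{L}/s$). A block $c^i$ with $i\in\mathcal{Q}$ contributes $\ell$ in each of its $w_i:=N-d_H(\mathbf{a}_i,\mathbf{a}_1)$ agreement coordinates and $\ell/s$ in the remaining $N-w_i$, for a total of $\tfrac{\ell}{s}\bigl(N+(s-1)w_i\bigr)$. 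Since $\mathcal{C}^{II}$ has relative distance $\delta$ and $\mathbf{a}_i\neq\mathbf{a}_1$, we have $d_H(\mathbf{a}_i,\mathbf{a}_1)\ge\delta N$, hence $w_i\le(1-\delta)N$ and the download from any contacted block is at most $\tfrac{\mathcal{L}}{s}\bigl(1+(s-1)(1-\delta)\bigr)$. Because $t\le n-1$ gives $s=t-k+1\le n-k=r$, the hypothesis $\delta\ge 1-\tfrac{\epsilon}{r-1}$ yields $(s-1)(1-\delta)\le(r-1)(1-\delta)\le\epsilon$, so every contacted block sends at most $(1+\epsilon)\tfrac{\mathcal{L}}{s}$ symbols. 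Together with the MDS property of $\mathcal{C}$ (Appendix~\ref{LemmaRef:d11}), this is exactly the statement that $\mathcal{C}$ is an $\epsilon$-MSR code with the $(\mathcal{T},\mathcal{T}')$-repair property for $\mathcal{T}=M-n+t$ and $\mathcal{T}'=M-|\{\mathbf{a}\in\mathcal{C}^{II}:\wt(\mathbf{a})=N\}|$.

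I expect the main obstacle to be the bookkeeping in the first paragraph rather than any hard estimate: one must verify that a \emph{single} set $\mathcal{S}$ of $\mathcal{T}$ helper blocks can simultaneously support all $N$ coordinate-wise repairs --- that after reserving the forced set $\mathcal{Q}$ there remain, for every $j$, exactly the $|V_j|-(r-s)$ free helpers that Claim~\ref{Claim:single-node-repair} demands, and that the worst-case helper load (realized by a block whose index is as close to $\mathbf{a}_1$ as the minimum distance of $\mathcal{C}^{II}$ permits) still fits under $(1+\epsilon)\mathcal{L}/s$. The identity $\mathcal{T}=M-r+s-1$ makes the first count match on the nose, and feeding $w_i\le(1-\delta)N$ into $\tfrac{\ell}{s}(N+(s-1)w_i)$ handles the second; the remaining steps (the coset argument for $\mathcal{T}'$ and the inequality $s\le r$) are routine.
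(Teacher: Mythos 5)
Your proof is correct and follows the same route as the paper's, but it fills in details that the paper only sketches. The paper defers the per-helper bound of $(1+\epsilon)\mathcal{L}/s$ to ``a similar analysis as in the proof of Theorem~III.1'' of the cited framework paper, whereas you derive it directly from $w_i \le (1-\delta)N$ and $(s-1)(1-\delta)\le(r-1)(1-\delta)\le\epsilon$; you also make explicit, via the identity $\mathcal{T}-|Q_j|=|V_j|-(r-s)$, that a single contact set $\mathcal{S}$ of size $\mathcal{T}$ supports all $N$ coordinate-wise repairs simultaneously, and you give the clean coset count $|\mathcal{Q}|=(M-1)-|\{\mathbf{a}\in\mathcal{C}^{II}:\wt(\mathbf{a})=N\}|\le\mathcal{T}'$, both of which the paper leaves implicit.
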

\begin{proof}
We assume without loss of generality that $c^{1}$ is the failed code block. We start by making the following observation based on our repair mechanism. In particular, our repair mechanism needs to contact all code blocks $c^{j}: j \in [2, M]$ indexed by a codeword $\mathbf{a}_{j} = (a_{j, 1}, \dots, a_{j, N}) \in \mathcal{C}^{II}$ such that there exists at least one $k \in [N]$ with $a_{1, k} = a_{j, k}$. By definition, $\mathcal{T}'$ is the number of such \emph{compulsory} blocks.
 
Assuming that we contact all the $\mathcal{T}'$ blocks and pick the rest arbitrarily, the total repair bandwidth for repairing $c^{1}$ is
\[\left (\dfrac{M}{q} - 1 \right )N\ell + \left (M - n + t - \left (\dfrac{M}{q} - 1 \right ) \right )\dfrac{N\ell}{s} \]
symbols (over $\mathbb{B}$). To prove that $\mathcal{C}$ is an $\epsilon$-MSR code with the $(\mathcal{T}, \mathcal{T}')$-repair property, we still need an upper bound on the download from each contacted code block. For this purpose, we can rewrite the above expression assuming $P$ denotes the set of contacted code blocks (i.e. $|P| = M - n + t$):
\begin{equation}
\sum_{i \in P}|\{j \in [1, N]: a_{i, j} = a_{1, j}\}| \ell + |\{j \in [1, N]: a_{i, j} \neq a_{1, j}\}| \dfrac{\ell}{s}
\end{equation}
symbols (over $\mathbb{B}$). Notice that we have now bounded the download from each code block. Now, we can do a similar analysis as in the proof of Theorem~\rom{3}.$1$ in~\cite{RTGE17} to show that the download from each contacted code block is upper bounded by $(1 + \epsilon) \frac{N\ell}{s}$ symbols (over $\mathbb{B}$) as long as $\delta \ge 1 - \frac{\epsilon}{r - 1}$. Recall that we contact only $\mathcal{T}$ code blocks out of which $\mathcal{T}'$ compulsory and download an equal amount of data from each contacted block, completing the proof.
\end{proof}
\paragraph{} Naturally, we would like $\mathcal{T}'$ to as small as possible, which translates to $\mathcal{C}^{II}$ having many codewords with Hamming weight $N$. We now construct $\mathcal{C}^{II}$ with at least $q^{\frac{u - 1}{u} K}$ ($u > 3$) such codewords, using ideas from Proposition $3.1$ in~\cite{JX12}. In particular, with growing $u$, the number of codewords with Hamming weight $N$ becomes arbitrarily close to $q^{K}$. We provide a brief introduction to AG-codes in Appendix~\ref{agcodes}.
\begin{theorem} \label{theorem:ag1}
Let $u > 3$ be a positive integer and $q$ be a square prime power greater than $2 (u + 1)^{2} \frac{r^2}{\epsilon^{2}}$ for some $\epsilon > 0$. Then, we can construct an explicit family of AG codes $\mathcal{C}^{II} = (N, K = u g, M = q^{K}, D = \delta N)_{q}$, where $\frac{g}{N} \approxeq \frac{1}{\sqrt{q} - 1}$ and $\delta \ge 1 - \frac{\epsilon}{r - 1}$, containing at least $q^{\frac{u - 1}{u} K}$ codewords with Hamming weight $N$.
\end{theorem}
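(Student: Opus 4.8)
The plan is to realize $\mathcal{C}^{II}$ as an algebraic--geometry code on a curve over $\mathbb{F}_q$ that is optimal with respect to the Drinfeld--Vladut bound, read off the ordinary AG--code parameters from Riemann--Roch, use the hypothesis on $q$ to meet the distance requirement of Lemma~\ref{Lemma:dmain1}, and count the weight-$N$ codewords following Proposition~3.1 of~\cite{JX12}. First I would invoke the explicit Garcia--Stichtenoth tower over the square prime power $\mathbb{F}_q$: for infinitely many genera $g$ it provides an explicit function field $F/\mathbb{F}_q$ of genus $g$ with $N$ rational places $P_1,\dots,P_N$ where, up to a lower-order correction, $N \ge (\sqrt{q}-1)g$, so that $g/N \approxeq 1/(\sqrt{q}-1)$. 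Put $D := P_1+\cdots+P_N$, $K := ug$, and let $G$ be an effective divisor of degree $\deg G = K+g-1 = (u+1)g-1$ with support disjoint from $\{P_1,\dots,P_N\}$ (such a divisor exists because its degree exceeds $g$); set $\mathcal{C}^{II} := C_L(D,G)$. Since $u>3$ we have $\deg G \ge 2g-1$, so Riemann--Roch gives $\dim_{\mathbb{F}_q} L(G) = \deg G - g + 1 = ug = K$; since also $\deg G < N$ (verified next), the evaluation map $L(G) \to \mathbb{F}_q^N$ is injective, whence $M = q^K$, and the standard bound yields $D(\mathcal{C}^{II}) \ge N - \deg G = \delta N$ with $\delta \ge 1 - \frac{(u+1)g-1}{N}$.

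The numerical part is then a short chain of inequalities. Using $N \ge (\sqrt{q}-1)g$ one gets $\frac{(u+1)g-1}{N} \le \frac{u+1}{\sqrt{q}-1}$, and the hypothesis $q > 2(u+1)^2 r^2/\epsilon^2$, equivalently $\sqrt{q} > \sqrt{2}\,(u+1)r/\epsilon$, comfortably implies both $\sqrt{q} > u+2$ (hence $\deg G < N$, using $u \ge 3$) and $\sqrt{q}-1 \ge (u+1)(r-1)/\epsilon$ (hence $\delta \ge 1 - \epsilon/(r-1)$). The factor $2$ and the square in the hypothesis on $q$ are exactly the slack needed to absorb the lower-order correction in $g/N$ and the stray $-1$'s.

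The heart of the proof is the count of weight-$N$ codewords. A codeword $(f(P_1),\dots,f(P_N))$ with $f \in L(G)$ has weight $N$ precisely when $f$ vanishes at none of $P_1,\dots,P_N$, equivalently when the effective divisor $\Div(f)+G$ (which is linearly equivalent to $G$) has support disjoint from $\{P_1,\dots,P_N\}$. Following Proposition~3.1 of~\cite{JX12}, I would bound the number of such $f$ --- equivalently, the number of effective divisors in the linear system $|G|$ avoiding the $N$ evaluation places --- from below by $q^{(u-1)g} = q^{\frac{u-1}{u}K}$. The naive union bound over the $N$ places is worthless here since $N \gg q$; instead one exploits the geometry of the curve, e.g.\ inclusion--exclusion combined with Riemann--Roch for the divisors $G - \sum_{i \in S}P_i$, or equivalently the modified zeta function $Z_C(t)(1-t)^N$, using crucially that Drinfeld--Vladut optimality keeps $\deg G = (u+1)g-1$ small relative to $N \approx (\sqrt{q}-1)g$. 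Letting $g$ range over its infinitely many admissible values yields the asserted explicit family, and the fraction $q^{(u-1)g}/q^{ug}$ of weight-$N$ codewords tends to $1$ as $u \to \infty$.

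The main obstacle is this last count: turning the heuristic that a function in $L(G)$ has only $\deg G \ll N$ zeros, so that typically none of them falls among the $N$ prescribed places, into the clean lower bound $q^{(u-1)g}$ needs the careful divisor-counting of~\cite{JX12}, since a direct inclusion--exclusion is vacuous in this parameter regime; along the way one must also check that the places carrying $\supp(G)$ are available on the same curve without spoiling the ratio $g/N$.
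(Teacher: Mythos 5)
Your high-level strategy coincides with the paper's: realize $\mathcal{C}^{II}$ as an AG code on a Drinfeld--Vladut-optimal curve over $\mathbb{F}_q$, set $\deg G = K + g - 1$ and read off $K$, $M$, $D$ from Riemann--Roch (Proposition~\ref{prop:r-r}), verify the distance constraint from the hypothesis on $q$, and count full-weight codewords via the subspaces $\mathcal{L}(G - P_j)$ following Proposition~3.1 of~\cite{JX12}. The numerical bookkeeping in your second paragraph is also correct. But the proposal defers precisely the step that constitutes the bulk of the proof --- establishing $|A| \ge q^{(u-1)g}$ for $A = \mathcal{L}(G)\setminus\bigcup_{j=1}^{N}\mathcal{L}(G - P_j)$ --- and your remark that ``a direct inclusion--exclusion is vacuous in this parameter regime'' is mistaken: a direct inclusion--exclusion is exactly what the paper (following~\cite{JX12}) carries out, and it is not vacuous.

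Two observations make it work. First, Riemann--Roch gives the intersection cardinalities \emph{exactly}, not merely as upper bounds: $\bigl|\bigcap_{\alpha=1}^{p}\mathcal{L}(G - P_{j_\alpha})\bigr| = q^{K-p}$ for $1 \le p \le K-g$ (so that $\deg(G - \sum_\alpha P_{j_\alpha}) \ge 2g-1$), while these intersections are empty once $p > K - g + 1$ because a nonzero codeword cannot vanish at more than $N - D \le K - g + 1$ prescribed places; hence the alternating sum truncates after $K-g+1$ levels rather than running to $N$. Second, the truncated sum equals $q^{K}(1 - 1/q)^{N}$ minus the tail $\sum_{p=K-g+1}^{N}(-1)^p\binom{N}{p}q^{K-p}$, and since $q > N/((u-1)g)$ (forced by $g/N \approxeq 1/(\sqrt{q}-1)$) the consecutive tail terms decrease, so the tail is dominated by its first term, which up to constants is at most $q^{((u+1)g+1)/2}$. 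Comparing exponents --- this is where $u > 3$ is used to get $(u-1)g > ((u+1)g+1)/2$ --- and checking $q^{ug}(1-1/q)^{N} \ge q^{(u-1)g}$ (equivalent to $2N/q \le g\log q$, again from $g/N \approxeq 1/(\sqrt{q}-1)$) yields $|A| \ge q^{(u-1)g} = q^{\frac{u-1}{u}K}$. Without this estimate the proposal restates the theorem's conclusion rather than deriving it.
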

\begin{proof}
	Let $\mathcal{C}^{II}$ be a $C_{\mathcal{L}}(R, G, \mathbf{v})$ AG code with $\ell{(G)} = m + 1 - g$, where $m = \deg{(G)}$. By Proposition~\ref{prop:r-r}, it follows that $\mathcal{C}^{II}$ is a $(N, K, M = q^{K}, D)$ code with $2g - 1 < m < N$ and $D \ge N - m$. Note that we need to have $K = ug$ and $D = \delta N$, where $\delta \ge 1 -\frac{\epsilon}{r - 1}$ and we assume that $\frac{g}{N} \approxeq \frac{1}{\sqrt{q} - 1}$ (follows from the lower bound of~\cite{VD83}). 
	
	For a codeword $(P_1(w), \dots, P_{n}(w))$ in $C_{\mathcal{L}}(R, G, \mathbf{1})$ with $w \in \mathcal{L}(G - \sum_{i = 1}^{N} P_i)$, the $j$-th co-ordinate $P_j(w)$ is zero if and only if $w \in \mathcal{L}(G - \sum_{i = 1}^{N}P_i + P_j)$. Thus, the number of codewords with Hamming weight $N$ in $C_{\mathcal{L}}(R, G, \mathbf{1})$ is the size of 
	\begin{equation} \label{eq:agcode1}
	\mathcal{L}(G - \sum_{i = 1}^{N} P_i) \setminus \bigcup_{j = 1}^{N} \mathcal{L}(G - \sum_{i = 1}^{N}P_i + P_j).
	\end{equation}
	We denote this set by $A$ and denote $\mathcal{L} (G - \sum_{i = 1}^{N}P_i + P_j)$ by $A_{j}$ for every $j \in [N]$. Note that the distance $D$ of $\mathcal{C}^{II}$ is at least $N - m \ge N - (K + g - 1) = N - K - g + 1$, where the second inequality follows from $m \le K + g - 1$. Using the inclusion-exclusion principle, we can estimate~\eqref{eq:agcode1} as follows:
	\begin{equation} \label{eq:agcode2}
	|A| = |\mathcal{L}(G - \sum_{i = 1}^{N} P_i)| - \sum_{j = 1}^{N} |A_j|  + \dots + (-1)^{K - g + 1} \sum_{j_1, \dots, j_{K - g + 1} \in [N], j_1 \neq \dots \neq j_{K - g + 1}} |\cap_{\alpha = 1}^{K - g + 1} A_{j_\alpha}|. 
	\end{equation}
	Using the minimum distance property, we have 
	\[ N -D \le  N - (N - (K - g + 1)) = K - g + 1.\]
	In particular, this implies $|\cap_{\alpha = 1}^{p} A_{j_\alpha}| = 0$ for any $p > K - g + 1$ and $j_{1}, \dots, j_{p} \in [N], j_{1} \neq \dots \neq j_{p}$. We still need a bound on $|\cap_{\alpha = 1}^{p} A_{j_{\alpha}}|$ for every $p \in [K - g + 1]$, which we obtain using Proposition~\ref{prop:r-r}. Consider $C^{II}$ to be shortened by $p$ bits leading to a $(N - p, K - p, D)$ code. In order for~\eqref{eq:agdimension} to be satisified and $\mathcal{C}^{II}$ to be an AG-code, we require that $1 \le p \le K - g$. Thus, we have $|\cap_{\alpha = 1}^{p} A_{j_{\alpha}}| = q^{K - p}$ for every $p \in [K - g + 1]$, which follows from the size of the shortened code. We substitute this in~\eqref{eq:agcode2} and assume that $K - g + 1$ is even so that we can ignore the last term  to get
	\begin{align*}
	|A| \ge q^{K} - \binom{N}{1}q^{K - 1} + \binom{N}{2}q^{K - 2} - \dots + (-1)^{K - g} \binom{N}{K - g} = q^{K} \left (1 - \dfrac{1}{q} \right )^{N} - \sum_{p = K - g + 1}^{N} (-1)^{p} \binom{N}{p} q^{K - p}.
	\end{align*}
	Consider the alternating sum $\sum_{j = K - g + 1}^{N} (-1)^{j} \binom{N}{j} q^{K - j}$. To approximate this sum, we consider the largest ratio between any two successive terms $y$ and $y + 1$. More formally, we have
	\begin{align*}
	\dfrac{s_{y + 1}}{s_{y}} & = \dfrac{\binom{N}{j + 1} q^{K - (j + 1)}}{\binom{N}{j} q^{K - j}} = \dfrac{N - j}{(j + 1) q} \le \dfrac{N - (K - g + 1)}{(K - g + 2) q} < 1.
	\end{align*}
	Notice that the final inequality is true only if 
	\begin{equation} \label{eq:qinequality1}
	q > \dfrac{N - (K - g + 1)}{K - g + 2}.
	\end{equation}
	Recall our earlier assumption that $K = ug$, which we apply in the above equation to get
	\[ q > \dfrac{N - (u - 1)g - 1}{(u - 1)g + 2}.\]
	In particular, as long as $q > \frac{N}{(u - 1)g}$, the above inequality is always true since $\frac{g}{N} \approxeq \frac{1}{\sqrt{q} - 1}$. Notice that this implies $\frac{s_{y + 1}}{s_{y}} < 1$ for every $K - g < j \le N$, which in turn leads to
	\begin{equation}
	\sum_{j = K - g + 1}^{N} (-1)^{j} \binom{N}{j} q^{K - j} \le \binom{N}{K - g + 1} q^{g} \le \left(\dfrac{N \me}{K - g + 1} \right)^{K - g + 1} q^{g}.
	\end{equation}
	Observe that we can now lower bound $|A|$ as follows:
	\begin{align*}
	|A| & \ge  q^{K} \left (1 - \dfrac{1}{q} \right )^{N} -   \sum_{j = K - g + 1}^{N} (-1)^{j} \binom{N}{j} q^{K - j}\\
	& \ge  q^{K} \left (1 - \dfrac{1}{q} \right )^{N} -  \left(\dfrac{N \me}{K - g + 1} \right)^{K - g + 1} q^{g}.
	\end{align*}
	We can substitute $K = ug$ in the above inequality to get
	\begin{align*}
	|A|	& \ge  q^{ug} \left (1 - \dfrac{1}{q} \right )^{N} -  \left(\dfrac{N \me}{(u - 1)g + 1} \right)^{(u - 1)g + 1} q^{g} \\
	&  \ge q^{ug} \left (1 - \dfrac{1}{q} \right )^{N} -  \left(\dfrac{N \me}{(u - 1)g} \right)^{(u - 1)g + 1} q^{g}.
	\end{align*}
	Again, using $\frac{g}{N} \approxeq \frac{1}{\sqrt{q} - 1}$ and ignoring constants, we have
	\begin{align*}
	|A| &  \ge q^{ug} \left (1 - \dfrac{1}{q} \right )^{N} -  \left(\sqrt{q} - 1 \right)^{(u - 1)g + 1} q^{g} \\
	& \ge q^{ug} \left (1 - \dfrac{1}{q} \right )^{N} -  \sqrt{q}^{(u - 1)g + 1} q^{g} \\
	& =  q^{ug} \left (1 - \dfrac{1}{q} \right )^{N}  - q^{\frac{(u + 1)g + 1}{2}}.
	\end{align*}
	Notice that if we argue $q^{ug} \left (1 - \frac{1}{q} \right )^{N}  \ge q^{(u - 1)g}$ and $q^{\frac{(u + 1)g + 1}{2}} < q^{(u - 1)g}$, then we can conclude that there are at least $q^{(u - 1)g}$ codewords with full Hamming weight (since the first term would dominate). First, we prove
	\begin{align*}
	q^{ug} \left (1 - \frac{1}{q} \right )^{N}  \ge q^{(u - 1)g}.
	\end{align*}
	To this end, we first rewrite the above inequality as follows:
	\begin{align*}
	\left (1 - \frac{1}{q} \right )^{N} \ge q^{-g}.
	\end{align*}
	Using $1 - \frac{1}{q} \ge \me^{-2/q}$, we only need to argue that
	\begin{align*}
	\me^{-\frac{2N}{q}} \ge q^{-g} = \me^{-g \log{q}}.
	\end{align*} 
	Comparing only the exponents and reversing the inequality by multiplying with $-1$, we have
	\begin{align*}
	\frac{2N}{q} \le g \log{q},
	\end{align*}
	which is always true since $\frac{g}{N} \approxeq \frac{1}{\sqrt{q} - 1}$.
	We now prove
	\begin{align*}
	q^{\frac{(u + 1)g + 1}{2}} < q^{(u - 1)g}.
	\end{align*}
	Comparing the exponents, we have
	\begin{align*}
	(u - 1)g - \frac{(u + 1)g + 1}{2} >  0.
	\end{align*}
	Note that the above inequality is always true since $u > 3$ (by assumption). Thus, with $u > 3$ and $K = ug$, we have $q^{(u - 1)g} = q^{\frac{u - 1}{u} K}$ codewords with full Hamming weight. To complete the proof, we need to satisfy the distance constraint as well i.e. $D  = \delta N \ge \left(1 - \frac{\epsilon}{r - 1}\right)N$. In particular, we have 
	\[D \ge N - (K + g -1) \ge \left(1 - \frac{\epsilon}{r - 1}\right) N.\] 
	We can simplify the above expression as follows:
	\[1 - \frac{K}{N} - \frac{g}{N} + \frac{1}{N} \ge \left(1 - \frac{\epsilon}{r - 1}\right).\]
	Observe that we can ignore $\frac{1}{N}$ since $N$ is reasonably large and substitute $\frac{K}{N} = \frac{ug}{N}$ and $\frac{g}{N} \approxeq \frac{1}{\sqrt{q} - 1}$ to get
	\[\frac{u + 1}{\sqrt{q} - 1} \le \frac{\epsilon}{r - 1}.\]
	Writing the inequality in terms of $q$ we get
	\[q > 2 (u + 1)^{2} \frac{r^{2}}{\epsilon^{2}} \ge \left((u + 1) \frac{r - 1}{\epsilon} + 1\right)^{2}.\]
	Once we pick such a $q$, we have shown that $\mathcal{C}^{II}$ has at least $q^{\frac{u - 1}{u} K}$ codewords having Hamming weight, completing the proof.
\end{proof}
\paragraph{} Finally, we present an explicit construction of $\mathcal{C}$ by first choosing a $\mathcal{C}^{II}$ from Theorem~\ref{theorem:ag1} and a corresponding $\mathcal{C}^{I}$ from Construction~\rom{4} in~\cite{YB17}.
\begin{theorem} \label{Lemma:inffamily}
Given positive integers $r$, $s \le r$, $u > 3$ and an $\epsilon > 0$, there exists a constant square prime power $q: q > 2 (u + 1)^{2} \cdot \frac{r^2}{\epsilon^{2}}$ such that for infinite values of $\mathcal{N}$, there exists an $\left(\mathcal{N}, \mathcal{K} = \mathcal{N} - r, \mathcal{T} = \mathcal{N} - s, \mathcal{L}\right)_{\mathbb{B}}$ $\epsilon$-MSR code satisfying the $\left(\mathcal{T} = \mathcal{N} - s, \mathcal{T}' \le \mathcal{N} - \mathcal{N}^{\frac{u - 1}{u}} \right)$-repair property. Moreover, the sub-packetization $\mathcal{L} = O_{s, q, u}(\log{\mathcal{N}})$ and the required field size $|\mathcal{B}|$ scales as $O_{r, q}(\mathcal{N})$. 
\end{theorem}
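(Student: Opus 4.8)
The plan is to exhibit $\mathcal{C}$ as a single instance of the framework of Construction~\ref{Construction:main}, with the AG code of Theorem~\ref{theorem:ag1} as the outer code $\mathcal{C}^{II}$ and Ye--Barg's Construction~\rom{4} (\cite{YB17}) as the inner code $\mathcal{C}^{I}$, and then to read off the parameters and invoke the two facts already in hand: Lemma~\ref{Lemma:dmain1} (near-optimal, load-balanced repair with the $(\mathcal{T},\mathcal{T}')$-property) and Appendix~\ref{LemmaRef:d11} (the MDS property). Given $r$, $s\le r$, $u>3$ and $\epsilon>0$, fix once and for all a square prime power $q>2(u+1)^{2}r^{2}/\epsilon^{2}$. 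Theorem~\ref{theorem:ag1} then supplies, for each admissible length $N$ of the underlying function-field tower, an AG code $\mathcal{C}^{II}=(N,K=ug,M=q^{K},D=\delta N)_{q}$ with $\delta\ge 1-\epsilon/(r-1)$, with $g/N\approxeq 1/(\sqrt q-1)$, and with at least $q^{\frac{u-1}{u}K}$ codewords of Hamming weight $N$; since the tower provides infinitely many admissible $N$, this already gives infinitely many values of $\mathcal{N}=M$.

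For the inner code, take the Construction~\rom{4} MSR code of \cite{YB17} on $n=q$ nodes (so the alphabet bound $r<q\le n$ demanded in Construction~\ref{d-opt-ag} holds), with $k=n-r$ message nodes and helper parameter chosen so that the resulting $\mathcal{C}$ contacts exactly $\mathcal{T}=\mathcal{N}-s$ blocks --- i.e.\ $t=n-s$, which makes its own sub-packetization base $t-k+1=r-s+1$ and is admissible precisely because $s\le r$ (the borderline $s=1$, where one would need $t=n-1$, being covered by the analogous construction of \cite{YB16} / \cite{SVK16}). We realise everything over a field $\mathbb{B}$ whose order is a prime power at least $\max(sn,\,M+1)$ --- and, if one insists on $F\subseteq\mathbb{B}$, a suitable power of $|F|$ --- which is $O_{r,q}(\mathcal{N})$ because $M=q^{K}=\mathcal{N}$ dominates while $sn=sq$ is a constant; this simultaneously meets the $|F|\ge sn$ requirement of Construction~\rom{4} and leaves room for the $M$ distinct non-zero scalars $\{\sigma_{i}\}_{i\in[M]}$ of Construction~\ref{Construction:main}.

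It remains to push the generic parameters of Construction~\ref{Construction:main} through. The code $\mathcal{C}=\mathcal{C}^{II}\circ\mathcal{C}^{I}$ has $\mathcal{N}=M$ code blocks with $r$ parity blocks, whence $\mathcal{K}=\mathcal{N}-r$; substituting the inner-code parameters into the contact count $M-n+t$ gives $\mathcal{T}=\mathcal{N}-s$. For the sub-packetization, $\mathcal{L}=N\ell$ with $\ell=(r-s+1)^{q}$ a constant, and since $K=ug$ forces $N=\Theta_{q,u}(K)$ while $\mathcal{N}=q^{K}$ gives $K=\log_{q}\mathcal{N}$, we get $\mathcal{L}=O_{s,q,u}(\log\mathcal{N})$. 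The MDS property of $\mathcal{C}$ is exactly Appendix~\ref{LemmaRef:d11}. Finally, Lemma~\ref{Lemma:dmain1} applies because its only hypothesis, $\delta\ge 1-\epsilon/(r-1)$, is what Theorem~\ref{theorem:ag1} guarantees; it certifies that $\mathcal{C}$ is an $\epsilon$-MSR code with the $(\mathcal{T},\mathcal{T}')$-repair property, where $\mathcal{T}'=M-|\{\mathbf{a}\in\mathcal{C}^{II}:\wt(\mathbf{a})=N\}|\le M-q^{\frac{u-1}{u}K}=\mathcal{N}-\mathcal{N}^{\frac{u-1}{u}}$, using $\mathcal{N}=q^{K}$.

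The substance of the proof has, by design, been absorbed into the two results already proved --- Theorem~\ref{theorem:ag1}, which constructs an AG code that is simultaneously high-distance and rich in full-weight codewords, and Claim~\ref{Claim:single-node-repair}/Lemma~\ref{Lemma:dmain1}, which carry out the interpolation-based repair analysis and the per-block load bound. What is left here is essentially a compatibility check: that the constraints $r<q\le n$, $s\le r$, $k\le t<n-1$, the square-prime-power lower bound on $q$, and $\delta\ge 1-\epsilon/(r-1)$ can be satisfied simultaneously for the given $(r,s,u,\epsilon)$, and that the generic parameter tuple of Construction~\ref{Construction:main} collapses to the stated $(\mathcal{N},\mathcal{N}-r,\mathcal{N}-s,\mathcal{L})$. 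I expect the only slightly delicate points to be (i) making $N$, $g$, and $q^{K}$ integers (hence the $\approxeq$ in $g/N\approxeq 1/(\sqrt q-1)$) without spoiling the distance and weight-enumerator estimates, and (ii) tracking the constants in $\mathcal{L}=O(\log\mathcal{N})$ and $|\mathbb{B}|=O(\mathcal{N})$ through the bound of \cite{VD83}.
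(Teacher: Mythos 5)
Your proof follows the paper's architecture exactly — compose the AG code of Theorem~\ref{theorem:ag1} with Ye--Barg's Construction~\rom{4} via Construction~\ref{Construction:main}, then invoke Lemma~\ref{Lemma:dmain1} (via Claim~\ref{Claim:single-node-repair}) and Lemma~\ref{Lemma:d11} — but you make a different choice of the inner code's helper parameter $t$. You set $t = n - s$, so that $\mathcal{T} = M - n + t = \mathcal{N} - s$ matches the stated value exactly, giving inner sub-packetization base $t - k + 1 = r - s + 1$. The paper's own proof instead sets $t = s + k - 1$, so that the base is $s$ and $\ell = s^q$, but then $\mathcal{T} = M - n + t = \mathcal{N} - (r - s + 1)$, which equals the stated $\mathcal{N} - s$ only if $r = 2s - 1$; the theorem statement and the paper's proof are thus using the same symbol $s$ for two quantities related by $s \leftrightarrow r - s + 1$, and your choice of $t$ is the one that makes the stated contact count literally correct. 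This buys exact agreement with the statement at a small bookkeeping cost you should acknowledge: (i) $\mathcal{L} = N(r - s + 1)^q$, so the hidden constant depends on $r$ in addition to $s, q, u$ (harmless since $r < q$ absorbs it, but the stated $O_{s,q,u}$ was calibrated to the paper's $\ell = s^q$); and (ii) the boundary $s = 1$ forces $t = n - 1$, outside Construction~\rom{4}'s admissible range $k \le t < n - 1$ — you flag this, but since Claim~\ref{Claim:single-node-repair} and its interpolation argument are written against the $H_i^{j-1}$ parity-check blocks of Construction~\rom{4}, for $t = n - 1$ one should fall back directly to the $\epsilon$-MSR repair analysis of \cite{RTGE17} rather than merely substituting the inner codes of \cite{YB16}/\cite{SVK16}.
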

\begin{proof}
To this end, we first choose a prime power $q$ such that $q > 2 (u + 1)^{2} \cdot \frac{r^2}{\epsilon^2}$ for fixed values of $r$, $u$ and $\epsilon$. Then, from Theorem~\ref{theorem:ag1}, we can always construct a code $\mathcal{C}^{II} =  (N, M = q^{K}, D = \delta N)_{q}$ having $q^{\frac{u - 1}{u} K}$ codewords of Hamming weight $N$. We combine this with the $(n = q, k = q - r, s \le r, \ell = s^{q})_{\mathbb{B}}$ MSR code with $t = s + k - 1$ from~\cite{YB17} as described above. This gives us an $\epsilon$-MSR code with length $\mathcal{N} = q^{K}$ and sub-packetization level $\mathcal{L} = N \ell = N s^{q}$. Observe that $\mathcal{N} = q^{K} = q^{u g}$ and taking $\log$ on both sides, we have $\log{\mathcal{N}} = u g \log{q}$. We can now substitute $g \approxeq \frac{N}{\sqrt{q} - 1}$ to get $\log{\mathcal{N}} = u \frac{N}{\sqrt{q} - 1} \log{q}$. Now, using $N = \frac{\mathcal{L}}{s^{q}}$, we have
\[\log{\mathcal{N}}  = u \frac{\mathcal{L}}{s^{q} (\sqrt{q} - 1)} \log{q}.\]
Since $q$, $s$ and $u$ are constants, we have that $\mathcal{L} = O_{s, q, u}(\log{\mathcal{N}})$.

From the construction of $\mathcal{C}$, it follows that the scalars $\{\lambda_{i, j}\}_{i \in [q], j \in [s]}$ of $\mathcal{C}^{I}$ can be obtained from a finite field of size at least $qs$ and the scalars $\{\sigma_{i}^{j - 1}\}_{i \in [M], j \in [r]}$ of $\mathcal{C}$ can be obtained from a finite field of size at least $Mr = q^{K}r$. Combining this with the condition $s < r$, we need a field size of at least $q^{K}qr + 1$. Thus, for constant $r$ and $q$, the required field size $O(q^{K}qr)$ scales as $O_{r, q}(\mathcal{N})$, completing the proof.
\end{proof}
\section{Conclusion}
We present an explicit construction of $\epsilon$-MSR codes satisfying the $\mathcal{T}$-optimal repair property, resolving an open question posed in~\cite{RTGE17} ,\cite{RTGE17J} and~\cite{GR17}. The obtained codes ensure load balancing among the contacted code blocks during the repair process. We see two major directions to extend this work: one is to obtain a construction where the repair procedure can contact any subset of $\mathcal{T}$ helper nodes or prove a lower bound showing that some compulsory code blocks are always necessary. The second is to extend this construction to $\epsilon$-MSR codes which can repair multiple erasures, which is part of our ongoing work.

\section*{Acknowledgments}
We thank Vijay Kumar, Sergey Yekhanin, Myna Vajha, Vinayak Ramkumar and Chaoping Xing for helpful discussions. We thank Microsoft Research, Bangalore and School of Physical and Mathematical Sciences, Nanyang Technological University, Singapore for their hospitatilty. VG likes to thank NSF for their generous support through the grant CCF-1563742.
\bibliographystyle{acm}
\bibliography{sai}

\begin{thebibliography}{10}

\bibitem{DGWR07}
{\sc Dimakis, A.~G., Godfrey, B., Wainwright, M.~J., and Ramchandran, K.}
\newblock Network coding for distributed storage systems.
\newblock In {\em {INFOCOM}\/} (2007).

\bibitem{DGWWR10}
{\sc Dimakis, A.~G., Godfrey, B., Wu, Y., Wainwright, M.~J., and Ramchandran,
  K.}
\newblock Network coding for distributed storage systems.
\newblock {\em {IEEE} Trans. Information Theory\/} (2010).

\bibitem{GTC14}
{\sc Goparaju, S., Tamo, I., and Calderbank, A.~R.}
\newblock An improved sub-packetization bound for minimum storage regenerating
  codes.
\newblock {\em {IEEE} Trans. Information Theory\/} (2014).

\bibitem{GR17}
{\sc Guruswami, V., and Rawat, A.~S.}
\newblock {MDS} code constructions with small sub-packetization and
  near-optimal repair bandwidth.
\newblock In {\em {SODA}\/} (2017).

\bibitem{JX12}
{\sc Jin, L., and Xing, C.}
\newblock Euclidean and hermitian self-orthogonal algebraic geometry codes and
  their application to quantum codes.
\newblock {\em {IEEE} Trans. Information Theory\/} (2012).

\bibitem{PDC11}
{\sc Papailiopoulos, D.~S., Dimakis, A.~G., and Cadambe, V.~R.}
\newblock Repair optimal erasure codes through hadamard designs.
\newblock In {\em 49th Annual Allerton Conference on Communication, Control,
  and Computing\/} (2011).

\bibitem{RSK11}
{\sc Rashmi, K.~V., Shah, N.~B., and Kumar, P.~V.}
\newblock Optimal exact-regenerating codes for distributed storage at the {MSR}
  and {MBR} points via a product-matrix construction.
\newblock {\em {IEEE} Trans. Information Theory\/} (2011).

\bibitem{RTGE17}
{\sc Rawat, A.~S., Tamo, I., Guruswami, V., and Efremenko, K.}
\newblock {$\epsilon$}-{MSR} codes with small sub-packetization.
\newblock In {\em {ISIT}\/} (2017).

\bibitem{RTGE17J}
{\sc Rawat, A.~S., Tamo, I., Guruswami, V., and Efremenko, K.}
\newblock {MDS} code constructions with small sub-packetization and
  near-optimal repair bandwidth.
\newblock {\em CoRR abs/1709.08216\/} (2017).

\bibitem{SVK16}
{\sc Sasidharan, B., Vajha, M., and Kumar, P.~V.}
\newblock An explicit, coupled-layer construction of a high-rate {MSR} code
  with low sub-packetization level, small field size and all-node repair.
\newblock {\em CoRR abs/1607.07335\/} (2016).

\bibitem{SVK17}
{\sc Sasidharan, B., Vajha, M., and Kumar, P.~V.}
\newblock An explicit, coupled-layer construction of a high-rate {MSR} code
  with low sub-packetization level, small field size and d {\textless} (n - 1).
\newblock In {\em {ISIT}\/} (2017).

\bibitem{TWB13}
{\sc Tamo, I., Wang, Z., and Bruck, J.}
\newblock Zigzag codes: {MDS} array codes with optimal rebuilding.
\newblock {\em {IEEE} Trans. Information Theory\/} (2013).

\bibitem{VD83}
{\sc Vlâdut, S.~G., and Drinfel'd, V.~G.}
\newblock Number of points of an algebraic curve.
\newblock {\em Functional Analysis and its Applications\/} (1983).

\bibitem{YB17}
{\sc Ye, M., and Barg, A.}
\newblock Explicit constructions of high-rate {MDS} array codes with optimal
  repair bandwidth.
\newblock {\em {IEEE} Trans. Information Theory\/} (2017).

\bibitem{YB16}
{\sc Ye, M., and Barg, A.}
\newblock Explicit constructions of optimal-access {MDS} codes with nearly
  optimal sub-packetization.
\newblock {\em {IEEE} Trans. Information Theory\/} (2017).

\end{thebibliography}
\appendix
\section{Missing Details in Section~3}
\subsection{Ye-Barg Construction with $t$-Optimal Repair Property (from~\cite{YB17})} \label{Ye-Barg-d-optimal}
\subsubsection{Preliminaries and Definition}
\begin{definition} [$t$-optimal repair property] \label{Definition:msrtopt}
	For a fixed value of $k \le t < n - 1$, an MSR code $\mathcal{C} = [n, k\ell, d_{min} = n - k + 1, \ell = s^{n}]$ with $s = t - k + 1$ is said to have the {\em $t$-optimal repair property} if and only if the exact repair of any node $i \in [n]$ can be accomplished by receiving $\frac{\ell}{s}$ symbols (over $\mathbb{B}$) from any set of $t$ \emph{helper nodes}.	
\end{definition}
\subsubsection{Construction}
\begin{construction} \label{construction:ye-barg-d}
	Let $F$ be a finite field of size $|F| \ge sn$, and $\ell = s^{n}$. Let $\{\lambda_{i, j}\}_{i \in [n], j \in [s]}$ be $sn$ distinct elements in $F$. Let $\mathcal{C}^{I} \in F^{\ell n}$ be an $(n, k, t, \ell)$ array code with nodes $C^{I}_i \in F^{\ell}$, $i  \in [n]$, where each $C^{I}_i$ is a column vector indexed by $\{c^{I}_{i, b}: b \in[0, \ell - 1]\}$. $\mathcal{C}^{I}$ is defined in the following parity-check form:
	\begin{equation} \label{eq:ybs-pc1}
	\mathcal{C}^{I} = \{(C^{I}_i)_{i \in [n]}: \sum_{i = 1}^{i = n} H_{j, i} \cdot C^{I}_i = 0, \text{\space} j  \in [r]\}, 
	\end{equation}
	where $H_{j, i}: j \in [r], i \in [n]$ are $\ell \times \ell$ matrices over $F$. Given positive integers $r$ and $n$, the $(n, k = n - r, k \le t < n - 1, \ell = s^{n})$ array code with $s = t - k + 1$, the code $\mathcal{C}$ is defined by
	\begin{equation} 
	H_{j, i} = H_{i}^{j - 1}, j \in [r], i \in [n],
	\end{equation}
	where $H_{i} = \sum_{b = 0}^{\ell - 1}\lambda_{i, b_i} \cdot e_b \cdot e_{b}^{T}$ (with a slight abuse of notation) for every $i \in [n]$. Here, $\{e_b : b \in [0, \ell - 1]\}$ is the standard basis of $F^{\ell}$ over $F$, and $b_j$ is the $j$-th digit from the right in the representation of $b$ in the $s$-ary form, $b = (b_n, b_{n  -1}, \dots, b_1)$.
\end{construction}
\begin{lemma} \label{Lemma:Ye-Barg-Simple-d-Repair}
	The code $\mathcal{C}$ given by Construction~\ref{construction:ye-barg-d} has the $t$-optimal repair property (Definition~\ref{Definition:msrtopt}).
\end{lemma}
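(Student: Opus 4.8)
The plan is to reproduce, in the present notation, the linear-algebraic repair scheme that underlies Construction~\rom{4} of~\cite{YB17}; the fact that this construction yields a vector MDS code (so that $d_{\min}=n-k+1$, as required by Definition~\ref{Definition:msrtopt}) is established in~\cite{YB17} and I would simply import it, so the entire task is to exhibit, for every failed node and every choice of $t$ helpers, a repair procedure that downloads exactly $\ell/s$ symbols (over $\mathbb{B}=F$) from each helper.

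First I would fix the failed node $f\in[n]$ and an arbitrary helper set $\mathcal{R}\subseteq[n]\setminus\{f\}$ with $|\mathcal{R}|=t$, and write $\bar{\mathcal{R}}=[n]\setminus(\mathcal{R}\cup\{f\})$, so $|\bar{\mathcal{R}}|=n-1-t$. The combinatorial heart of the argument is to partition the coordinate set $[0,\ell-1]$ of $F^{\ell}$ into $\ell/s$ groups indexed by the $s$-ary digits in all positions \emph{other than} the $f$-th: for each fixed tuple $(b_n,\dots,b_{f+1},b_{f-1},\dots,b_1)$ the corresponding group is $\{\,b(f,u):u\in[0,s-1]\,\}$, where $b(f,u)=(b_n,\dots,b_{f+1},u,b_{f-1},\dots,b_1)$. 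Recovering $C_f$ is the same as recovering, for each group, the $s$ symbols $\{c_{f,b(f,u)}\}_{u\in[0,s-1]}$.

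Next I would restrict the $r$ parity-check equations $\sum_{i=1}^{n}H_i^{j-1}C_i=0$, $j\in[r]$, to the coordinates of one group. Since $H_i$ is the diagonal matrix with $(b,b)$-entry $\lambda_{i,b_i}$, for each $j$ and each $u$ this restriction reads $\lambda_{f,u}^{j-1}c_{f,b(f,u)}+\sum_{i\ne f}\lambda_{i,b_i}^{j-1}c_{i,b(f,u)}=0$. Summing over $u\in[0,s-1]$ and setting $\mu_i^{(b)}:=\sum_{u=0}^{s-1}c_{i,b(f,u)}$ for $i\ne f$ collapses these into the $r$ equations $\sum_{u=0}^{s-1}\lambda_{f,u}^{j-1}c_{f,b(f,u)}+\sum_{i\ne f}\lambda_{i,b_i}^{j-1}\mu_i^{(b)}=0$, $j\in[r]$. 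Each helper $i\in\mathcal{R}$ contributes the single field symbol $\mu_i^{(b)}$, a fixed $F$-linear function of $C_i$, so over all $\ell/s$ groups each helper downloads exactly $\ell/s$ symbols, for a total of $t\ell/s$. The remaining unknowns are the $s$ target symbols $c_{f,b(f,u)}$ together with the $n-1-t$ symbols $\mu_i^{(b)}$, $i\in\bar{\mathcal{R}}$, so the system is $r\times r$ precisely because $r=n-k$ and $s=t-k+1$ give $r=s+(n-1-t)$. Its coefficient matrix is the $r\times r$ (transposed) Vandermonde matrix on the points $\{\lambda_{f,u}\}_{u\in[0,s-1]}\cup\{\lambda_{i,b_i}\}_{i\in\bar{\mathcal{R}}}$, and these $r$ points are pairwise distinct: the $\lambda_{f,u}$ are distinct among themselves, the $\lambda_{i,b_i}$ for distinct $i\in\bar{\mathcal{R}}$ have distinct first indices and hence are distinct scalars, and no $\lambda_{f,u}$ equals any $\lambda_{i,b_i}$ with $i\ne f$ — all because the $sn$ scalars $\{\lambda_{i,j}\}$ are distinct. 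Hence the matrix is invertible, every group is repaired, and $C_f$ is recovered within the $t$-optimal bandwidth.

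The point that deserves the most care — and the only real obstacle — is the bookkeeping that makes the non-contacted nodes vanish ``for free'': the quantities $\mu_i^{(b)}$ with $i\in\bar{\mathcal{R}}$ are not data we download but genuine unknowns, and it is exactly the equality $r=s+(n-1-t)$ together with the global distinctness of all $\lambda_{i,j}$ that forces the relevant square Vandermonde system to be nonsingular \emph{simultaneously} for every failed node $f$ and every helper set $\mathcal{R}$. (If one preferred not to quote the MDS property from~\cite{YB17}, one would additionally check that each $(n-k)\ell\times(n-k)\ell$ block submatrix $\mathbf{H}_S$ with $|S|=n-k$ is invertible; since $H_{j,i}=H_i^{j-1}$ with $H_i$ diagonal, after grouping rows and columns by coordinate this factors into $\ell$ ordinary $(n-k)\times(n-k)$ Vandermonde determinants in the entries $\lambda_{i,b_i}$, $i\in S$, which are nonzero once $|F|\ge sn$ — but this is independent of the repair analysis above.)
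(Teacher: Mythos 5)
Your proof is correct, and it takes a genuinely different route from the paper's. After the common preparatory steps (grouping the $\ell$ coordinates into $\ell/s$ blocks of size $s$, summing the parity equations over each block, and introducing the aggregated symbols $\mu_i^{(b)}$), the paper writes the $r$ equations as $L_1+L_3=0$ with $L_1=E_{L_1}F_{L_1}$, $L_3=E_V F_V$ where $V=[2,n]$, and then applies the Ye--Barg polynomial-interpolation device: it constructs an $(r-s)\times r$ matrix $P$ whose rows encode polynomials of degree $<r$ vanishing on $\{\sigma_1\lambda_{1,u}\}_{u}$ (here $\sigma_1=1$), so that $PE_{L_1}=0$; it then argues that $PE_V$ is a generalized Vandermonde in which every $(r-s)\times(r-s)$ minor is nonzero, so that downloading any $t=|V|-(r-s)$ of the $\mu$'s determines the rest, and finally back-substitutes to recover $F_{L_1}$. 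You instead fix $\mathcal{R}$ up front, move the $t$ downloaded $\mu_i^{(b)}$ ($i\in\mathcal{R}$) to the right-hand side, and observe that the coefficient matrix on the remaining $s+(n-1-t)=r$ unknowns $\{c_{f,b(f,u)}\}_{u}\cup\{\mu_i^{(b)}\}_{i\in\bar{\mathcal{R}}}$ is a single $r\times r$ Vandermonde matrix on the $r$ pairwise-distinct evaluation points $\{\lambda_{f,u}\}_{u\in[0,s-1]}\cup\{\lambda_{i,b_i}\}_{i\in\bar{\mathcal{R}}}$, hence invertible. The two arguments are logically equivalent (your observation is that $(E_{L_1}\mid E_{V}|_{\bar{\mathcal{R}}})$ is a nonsingular square Vandermonde, which is exactly what the annihilator construction is exploiting), but yours is more elementary and shorter for this lemma, since it skips the intermediate matrix $P$ and the all-minors-nonzero claim. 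The paper's heavier machinery is presumably retained here for uniformity with Appendix~\ref{LemmaRef:PI1}, where the scalars $\sigma_i$ and the extra set $Q$ break the clean square-Vandermonde structure, and the annihilator-polynomial framework becomes the natural one. Your parenthetical remark about the MDS property (block-diagonal reduction of $\mathbf{H}_S$ to $\ell$ ordinary Vandermonde determinants) is also correct and matches what Lemma~\ref{Lemma:d11} does for the outer code.
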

\begin{proof}
	Without loss of generality, let's start by assuming that $C^{I}_{1}$ fails. We can write the $r\ell$ parity-check equations corresponding to $i \in [n]$ as follows:
	\begin{align*}
		H_{1, j} C^{I}_{1} + \sum_{i = 2}^{i = n} H_{i, j} C^{I}_{i} = 0 \text{ for every } j \in [r].
	\end{align*}
	We first recover $s$ symbols of the form $(c^{I}_{1, b(1, k)})_{k \in [0, s - 1]}$ for a fixed value of $b$ in the range $[0, \ell - 1]$. Note that $b$ is a $s$-ary representation of length $n$ and $b(i, u) = (b_{n}, \dots, b_{i+1}, u, b_{i - 1}, \dots, b_{1})$. Consider the $rs$ parity check equations corresponding to these $s$ symbols, given by
	\begin{align*}
		\lambda_{1, k}^{j - 1} c^{I}_{1, b(1, k)} + \sum_{i = 2}^{i = n} \lambda_{i, b_{i}}^{j - 1} c^{I}_{i, b(1, k)} = 0 \text{ for every } j \in [r], k \in [0, s - 1].
	\end{align*}
	We can sum this equation over all values of $k \in [0, s - 1]$ to get
	\begin{align*}
		\sum_{k = 0}^{s - 1} \lambda_{1, k}^{j - 1} c^{I}_{1, b(1, k)} + \sum_{k = 0}^{s - 1} \sum_{i = 2}^{i = n} \lambda_{i, b_{i}}^{j - 1} c^{I}_{i, b(1, k)} = 0 \text{ for every } j \in [r].
	\end{align*}
	Propagating the sum inside, we have
	\begin{align*}
		\sum_{k = 0}^{s - 1} \lambda_{1, k}^{j - 1} c^{I}_{1, b(1, k)} + \sum_{i = 2}^{i = n} \lambda_{i, b_{i}}^{j - 1} \sum_{k = 0}^{s - 1} c^{I}_{i, b(1, k)} = 0 \text{ for every } j \in [r].
	\end{align*}
	For every $i \in [2, n]$, let $\mu_{i, 1}^{(b)} = \sum_{k = 0}^{s - 1} c^{I}_{i, b(1, k)}$. We can substitute this in the above equation to obtain
	\begin{align*}
		\sum_{k = 0}^{s - 1} \lambda_{1, k}^{j - 1} c^{I}_{1, b(1, k)} + \sum_{i = 2}^{i = n} \lambda_{i, b_{i}}^{j - 1} \mu_{i, 1}^{(b)} = 0 \text{ for every } j \in [r].
	\end{align*}
	We can write the above $r$ equations in the following matrix form:
	\begin{equation} \label{eq:ybs-pc6}
	L_1 + L_3 = 0,
	\end{equation}
	where $L_1 = E_{L_1} F_{L_1}, L_3 = E_{V} F_{V}$. In particular,
	\begin{align*}
		E_{L_{1}} = (\lambda_{1, k}^{j - 1})_{j \in [r], k \in [0, s - 1]}, F_{L_1} = (c^{1}_{1, b(1, k)})_{k \in [0, s - 1]},
		E_{V} = (\lambda_{i, b_{i}}^{j - 1})_{j \in [r], i \in [2, n]}, F_{V} = (\mu^{(b)}_{i, 1})_{i \in [2, n]}.
	\end{align*}
	We use a polynomial interpolation argument on~\eqref{eq:ybs-pc6}, whose overview we provide here. Recall that our goal is to compte the number of symbols we need to download from the set $F_{V}$ in order to recover $F_{L_1}$. To do this, we first make $E_{L_1}$ zero and show that we can recover $F_{V}$ from any of its $|V| - (r - s)$ symbols. Then, we recover $F_{L_1}$ using $F_{V}$. Observe that we can recover $s$ symbols by contacting any $t$ remaining nodes and downloading $1$ scalar from each of them. 
	
	Finally, we have that in order to recover $C^{I}_{1}$ completely, we need to download a total of $t \frac{\ell}{s}$ symbols (over $F$) by contacting any $t$ nodes respectively, proving that $\mathcal{C}$ satisfies the $t$-optimal repair property.
\end{proof}
\subsection{Proof of MDS Property} \label{LemmaRef:d11}
\begin{lemma} \label{Lemma:d11}
	The code $\mathcal{C}$ is an MDS code.
\end{lemma}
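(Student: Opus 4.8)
The plan is to verify the MDS condition for the parity-check matrix $\mathcal{H}$ of $\mathcal{C}$ via the characterization established in Section~2: $\mathcal{C}$ is a vector MDS code if and only if for every subset $S$ of the $M$ code blocks with $|S| = r$, the $rN\ell \times rN\ell$ submatrix $\mathcal{H}_S$ obtained by stacking the corresponding thick columns is invertible over $\mathbb{B}$. Since each $\mathcal{H}_i$ is built out of the block-diagonal matrices $\mathrm{Diag}(H_{j, a_{i, 1}}, \dots, H_{j, a_{i, N}})$ scaled by powers $\sigma_i^{j-1}$, the whole submatrix $\mathcal{H}_S$ decomposes coordinate-by-coordinate into $N$ independent blocks, one for each position $k \in [N]$. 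So it suffices to show that for each $k$, the $r\ell \times r\ell$ matrix whose $(j, i)$ block is $\sigma_i^{j-1} H_{j, a_{i, k}} = \sigma_i^{j-1} H_{a_{i,k}}^{\,j-1}$ (as $i$ ranges over the $r$ chosen codewords and $j$ over $[r]$) is nonsingular.

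First I would reduce to a statement about the inner code $\mathcal{C}^{I}$. Fix a coordinate $k$ and consider the $r$ values $a_{i,k} \in [n]$ for $i \in S$. If these $r$ values were \emph{distinct}, then the relevant $r\ell \times r\ell$ matrix is exactly a block-Vandermonde-type matrix $\bigl(\sigma_i^{j-1} H_{a_{i,k}}^{\,j-1}\bigr)_{j,i}$, and its invertibility is precisely the MDS property of a Ye--Barg-style code (with the extra per-column scalars $\sigma_i$ absorbed into the diagonal matrices, which remains diagonal with distinct nonzero entries since the $\lambda$'s and $\sigma$'s come from a large enough field). The genuinely new difficulty is that the $a_{i,k}$ need \emph{not} be distinct: two chosen codewords of $\mathcal{C}^{II}$ can agree in coordinate $k$. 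Here I would invoke the distance of $\mathcal{C}^{II}$: any two distinct codewords of $\mathcal{C}^{II}$ differ in at least $D = \delta N$ coordinates, so across the $r$ chosen blocks a given value $v \in [n]$ can be repeated in coordinate $k$ only for a limited number of blocks, and — crucially — two blocks that \emph{do} collide at coordinate $k$ will differ at $\geq D$ other coordinates. This is what lets a global argument go through even though individual coordinates may be degenerate.

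The key steps, in order: (1) write $\det \mathcal{H}_S = \prod_{k=1}^{N} \det M_k$ where $M_k = \bigl(\sigma_i^{j-1} \lambda_{a_{i,k}, \cdot}^{\,j-1}\bigr)$ acts diagonally on the $\ell$ coordinates, so $\det M_k = \prod_{b=0}^{\ell-1} \det\bigl(\sigma_i^{j-1}\lambda_{a_{i,k}, b_{?}}^{\,j-1}\bigr)_{j,i \in S}$ — a genuine $r \times r$ Vandermonde in the quantities $\sigma_i \lambda_{a_{i,k},\ast}$; (2) observe that this $r\times r$ determinant is nonzero iff the $r$ scalars $\sigma_i \lambda_{a_{i,k}, \ast(b)}$ are pairwise distinct; (3) since the $\sigma_i$ are distinct and nonzero and chosen generically (from a field of size $> q^K q r$, as Theorem~\ref{Lemma:inffamily} already demands), even when $a_{i,k} = a_{i',k}$ one has $\sigma_i \lambda \neq \sigma_{i'}\lambda$ — so in fact \emph{every} coordinate's Vandermonde is nonsingular, and collisions in $\mathcal{C}^{II}$ cause no harm at all; (4) conclude $\det \mathcal{H}_S \neq 0$. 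Actually the cleanest route avoids the distance of $\mathcal{C}^{II}$ entirely and rests only on the distinctness of the $\sigma_i$'s together with the distinctness of the $\lambda$'s within each inner block — mirroring the MDS proof of Construction~\ref{Construction:main} in~\cite{RTGE17}.

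The main obstacle I anticipate is handling the case where $a_{i,k} = a_{i',k}$ for $i \neq i'$ cleanly: one must make sure that scaling $H_{a_{i,k}}^{\,j-1}$ by $\sigma_i^{j-1}$ genuinely breaks the would-be singularity, i.e. that the effective Vandermonde nodes $\{\sigma_i \lambda_{a_{i,k}, c}\}_{i \in S}$ are distinct for each fixed $c$ — this requires that the multiset $\{\sigma_i\}$ and $\{\lambda_{i,j}\}$ jointly avoid the bad algebraic relations $\sigma_i \lambda_{a,c} = \sigma_{i'}\lambda_{a',c}$, which is a finite list of constraints and is satisfiable over the field size already fixed in Theorem~\ref{Lemma:inffamily}. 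Once that bookkeeping is done, the determinant factorizes into nonzero Vandermonde determinants and the MDS property follows; I would then remark that this is exactly the argument of~\cite[Theorem~III.1]{RTGE17} specialized to the inner code of Construction~\ref{d-opt-ag}, so the details can be cited rather than repeated.
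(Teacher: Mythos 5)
Your proposal is correct and follows essentially the same strategy as the paper's proof: exploit the block-diagonal structure of each thick column $\mathcal{H}_i$, reduce the $rN\ell \times rN\ell$ minor $\mathcal{H}_E$ to a direct sum of $N\ell$ independent $r \times r$ Vandermonde-type matrices with nodes $\sigma_{e_m}\lambda_{a_{e_m,p},\,\cdot}$, and conclude invertibility from the distinctness of these nodes. The paper does exactly this: it observes that $\mathbf{U}_{E,i}$ has diagonal blocks, rearranges rows and columns to obtain a block-diagonal matrix whose blocks are Vandermonde matrices, and cites the corresponding argument in Theorem IV.2 of~\cite{YB17} (you cite Theorem III.1 of~\cite{RTGE17}, which is the analogous step in the framework paper — both references cover the same rearrangement).

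Two small remarks. First, your early digression through the distance of $\mathcal{C}^{II}$ is a genuine red herring, as you yourself observe a few lines later: the minimum distance of the outer code plays no role in the MDS property of $\mathcal{C}$ — it enters only in the repair-bandwidth analysis (Lemma~\ref{Lemma:dmain1}). You should cut that passage; the final proof should go directly to the block-diagonal decomposition. Second, and more usefully, your explicit identification of the \emph{non}-collision case as the real constraint on the $\sigma_i$'s — namely that the products $\sigma_{e_m}\lambda_{a,c}$ and $\sigma_{e_{m'}}\lambda_{a',c'}$ must be distinct even when $a \neq a'$ — is a point the paper leaves implicit behind the phrase ``suitably chosen'' in Construction~\ref{Construction:main}. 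Since each ratio $\lambda_{a',c'}/\lambda_{a,c}$ ranges over a set of size $O((qs)^2)$, a greedy inductive choice of $\sigma_1,\dots,\sigma_M$ avoiding the finitely many forbidden ratios succeeds over a field of size $O(M(qs)^2) = O_{r,q}(\mathcal{N})$, which is consistent with the field-size bound asserted in Theorem~\ref{Lemma:inffamily}. Making this explicit, as you do, is a small improvement in rigor over the paper's treatment.
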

\begin{proof}
	To argue that $\mathcal{C}$ is a MDS code, we show that any $rN\ell \times rN\ell$ sub-matrix of $\mathcal{H}$ consisting of $r = n - k$ thick columns of $\mathcal{H}$ corresponding to any $r$ distinct code blocks is full rank. Observe that by construction, the $r$ code blocks are indexed by $r$ different codewords in $\mathcal{C}^{II}$. Let the index of these $r$ codewords be denoted by: 
	\[E = \{e_{1}, \dots, e_{r}\}.\]
	The $rN\ell \times rN\ell$ sub-matrix of $\mathcal{H}$ that corresponds to the code blocks indexed by these codewords takes the following form:
	\begin{align*}
	\mathcal{H}_{E} & = \begin{bmatrix}
	\mathcal{H}_{e_{1}} & \mathcal{H}_{e_{2}} & \dots & \mathcal{H}_{e_{r}}
	\end{bmatrix}.
	\end{align*}
	Using~\eqref{eq:main_pc2}, the above equation can be written as
	\begin{align*}
	H_{E} & = \begin{bmatrix}
	Diag(H_{1, a_{e_1, 1}}, \dots, H_{1, a_{e_1, N}}) & \dots & Diag(H_{1, a_{e_r, 1}}, \dots, H_{1, a_{e_r, N}})\\
	\sigma_{e_1} \cdot Diag(H_{2, a_{e_1, 1}}, \dots, H_{2, a_{e_1, N}}) & \dots & \sigma_{e_r} \cdot Diag(H_{2, a_{e_r, 1}}, \dots, H_{2, a_{e_r, N}})\\
	\vdots & \vdots & \vdots \\
	\sigma_{e_1}^{r - 1} \cdot Diag(H_{r, a_{e_1, 1}}, \dots, H_{r, a_{e_1, N}}) & \dots & \sigma_{ e_r}^{r - 1} \cdot Diag(H_{r, a_{e_r, 1}}, \dots, H_{r, a_{e_r, N}}) \\
	\end{bmatrix}.
	\end{align*}
	Given the block diagonal structure of the $N\ell \times N\ell$ sub-matrices in $H_{E}$ (there are $r^{2}$ in total), we only need to argue that the following matrix is full rank:
	\begin{equation} 
	\mathbf{U}_{E, i} = \begin{bmatrix}
	H_{1, a_{e_1, i}} & H_{1, a_{e_2, i}} & \dots & H_{1, a_{e_{r}, i}}  \\
	\sigma_{e_1} \cdot  H_{2, a_{e_1, i}} & \sigma_{e_2} \cdot H_{2, a_{e_2, i}} & \dots & \sigma_{e_r} \cdot H_{2, a_{e_r, i}} \\
	\vdots & \vdots & \vdots & \vdots \\
	\sigma_{e_1}^{r - 1} \cdot  H_{r, a_{e_1, i}} & \sigma_{e_2}^{r - 1} \cdot H_{r, a_{e_2, i}} & \dots & \sigma_{e_r}^{r - 1} \cdot H_{r, a_{e_r, i}}
	\end{bmatrix}
	\end{equation}
	for every $i \in [N]$. Recalling the definition of $H_{i, j}$ for all $i \in [r], j \in [r]$ (from Appendix~\ref{Ye-Barg-d-optimal}), we see that they are diagonal blocks. Similar to the proof of Theorem \rom{4}.2 in~\cite{YB17}, we can rearrange the rows and columns of the matrix $\mathbf{U}_{E, i}$ to obtain a block diagonal matrix, where the diagonal blocks are Vandermonde matrices. Therefore, the matrix $U_{R, i}$ is a full rank matrix for every $i \in [N]$, implying that $\mathcal{C}$ is a MDS code, completing the proof.
\end{proof}
\subsection{Repair Bandwidth: Polynomial Interpolation Argument} \label{LemmaRef:PI1}
\paragraph{} To this end, we first construct a matrix $P$ of polynomial coefficients and left-multiply it with~\eqref{eq:ourd-pc12} to get
\[ P L_1 + P L_2  + P L_3 = 0.\]
Then, we argue that $P L_1 = 0$, $P L_2$ is a constant vector and $P L_3$ is a matrix with rank $r - s$ (i.e. full rank). In particular, this would imply that by downloading any subset of size $|V| - r - s$ from $F_{V}$, the remaining symbols can be recovered using the equation $P L_3 = - P L_2$. After this recovery, we can solve the equation $L_1 = - L_2 - L_3$ to recover $F_{L_1}$, which is our final goal.

\paragraph{} To construct $P$, we first define polynomials $p_{0}(x) = \prod_{u = 0}^{s - 1} (x - \sigma_{1} \cdot \lambda_{a_{1, 1}, u})$ and $p_{i}(x) = x^{i} p_{0}(x)$ for $i = 0, 1, \dots, r - s - 1$.  Notice that $t < n - 1$ and as a result, we have $r - s - 1 \ge 0$. Since the degree of $p_{i}(x)$ is less than $r$ for all $i =0, 1, \dots, r - s - 1$, we can write
\[p_{i}(x) = \sum_{j = 0}^{r - 1} p_{i,j} x^{j}.\]
We now define the $(r - s) \times r$ matrix
\begin{align*}
P = \begin{bmatrix}
p_{0, 0} & p_{0, 1} & \dots & p_{0, r - 1} \\
p_{1, 0} & p_{1, 1} & \dots & p_{1, r -1} \\
\vdots & \vdots & \vdots & \vdots \\
p_{r - s - 1, 0} & p_{r - s - 1, 1} & \dots & p_{r - s - 1, r - 1}
\end{bmatrix}.
\end{align*}
We multiply $P$ with the left-hand side of $L_1$ to get
\begin{align*}
P \cdot L_{1} & = \begin{bmatrix}
p_{0, 0} & \dots & p_{0, r - 1} \\
p_{1, 0} & \dots & p_{1, r -1} \\
\vdots & \vdots & \vdots \\
p_{r - s - 1, 0} & \dots & p_{r - s - 1, r - 1}
\end{bmatrix}
\begin{bmatrix}
1 & \dots & 1 \\
\sigma_{1} \cdot \lambda_{a_{1, 1}, 0} & \dots & \sigma_{1} \cdot \lambda_{a_{1, 1}, s - 1} \\
\sigma_{1}^{2} \cdot \lambda_{a_{1, 1}, 0}^2 & \dots & \sigma_{1}^{2} \cdot \lambda_{a_{1, 1}, s - 1}^2 \\
\vdots & \vdots & \vdots \\
\sigma_{1}^{r - 1} \cdot \lambda_{a_{1, 1}, 0}^{r -1} & \dots & \sigma_{1}^{r - 1} \cdot \lambda_{a_{1, 1},  s - 1}^{r - 1}
\end{bmatrix}
\begin{bmatrix}
c^{1}_{1, b(1, 0)} \\
c^{1}_{1, b(1, 1)}  \\
\vdots \\
c^{1}_{1, b(1, s - 1)}
\end{bmatrix}.
\end{align*}
Observe that the above equation simplifies to
\begin{align*}
P \cdot L_1 & = \begin{bmatrix}
p_{0}(\sigma_{1} \cdot \lambda_{a_{1, 1}, 0}) & \dots & p_{0}(\sigma_{1} \cdot \lambda_{a_{1, 1}, s - 1}) \\
p_{1}(\sigma_{1} \cdot \lambda_{a_{1, 1}, 0}) & \dots & p_{1}(\sigma_{1} \cdot \lambda_{a_{1, 1}, s - 1}) \\
\vdots & \vdots & \vdots \\
p_{r - s - 1}(\sigma_{1} \cdot \lambda_{a_{1, 1}, 0}) & \dots & p_{r - s - 1}(\sigma_{1} \cdot \lambda_{a_{1, 1}, s - 1})
\end{bmatrix} 
\begin{bmatrix}
c^{1}_{1, b(1, 0)} \\
c^{1}_{1, b(1, 1)}  \\
\vdots \\
c^{1}_{1, b(1, s - 1)}
\end{bmatrix}
= 0,
\end{align*}
where the final equality follows from our definition of $p_i : i \in [0, r - s- 1]$. We still need to multiply $P$ with $L_2$ and $L_3$ respectively, which we do one-by-one.
\begin{align*}
P \cdot L_{2} & = \begin{bmatrix}
p_{0, 0} & \dots & p_{0, r - 1} \\
p_{1, 0} & \dots & p_{1, r -1} \\
\vdots & \vdots & \vdots \\
p_{r - s - 1, 0} & \dots & p_{r - s - 1, r - 1}
\end{bmatrix}
\mathlarger{
	\sum_{q_i \in Q}}
\begin{bmatrix}
1 & \dots & 1 \\
\sigma_{q_i} \cdot \lambda_{a_{1, 1}, 0} & \dots & \sigma_{q_i} \cdot \lambda_{a_{1, 1}, s - 1} \\
\vdots & \vdots & \vdots \\
\sigma_{q_i}^{r - 1} \cdot \lambda_{a_{1, 1}, 0}^{r -1} & \dots & \sigma_{q_i}^{r - 1} \cdot \lambda_{a_{1, 1}, s - 1}^{r - 1}
\end{bmatrix} 
\begin{bmatrix}
c^{q_i}_{1, b(1, 0)} \\
c^{q_i}_{1, b(1, 1)}  \\
\vdots \\
c^{q_i}_{1, b(1, s - 1)}
\end{bmatrix}
\\
& = \mathlarger{\sum_{q_i \in Q}}
\begin{bmatrix}
p_{0}(\sigma_{q_i} \cdot \lambda_{a_{1, 1}, 0})  & \dots & p_{0}(\sigma_{q_i} \cdot \lambda_{a_{1, 1}, s - 1}) \\
p_{1}(\sigma_{q_i} \cdot \lambda_{a_{1, 1}, 0})  & \dots & p_{1}(\sigma_{q_i} \cdot \lambda_{a_{1, 1}, s - 1}) \\
\vdots & \vdots & \vdots \\
p_{r - s - 1}(\sigma_{q_i} \cdot \lambda_{a_{1, 1}, 0}) & \dots & p_{r - s - 1}(\sigma_{q_i} \cdot \lambda_{a_{1, 1}, s - 1})
\end{bmatrix} 
\begin{bmatrix}
c^{q_i}_{1, b(1, 0)} \\
c^{q_i}_{1, b(1, 1)} \\
\vdots \\
c^{q_i}_{1, b(1, s - 1)}
\end{bmatrix}
\\
& = \mathlarger{\sum_{q_i \in Q}}
\begin{bmatrix}
p_{0}(\sigma_{q_i} \cdot \lambda_{a_{1, 1}, 0}) & \dots & p_{0}(\sigma_{q_i} \cdot \lambda_{a_{1, 1}, s - 1}) \\
p_{0}(\sigma_{q_i} \cdot \lambda_{a_{1, 1}, 0}) \sigma_{q_i} \lambda_{a_{1, 1}, 0} & \dots & p_{0}(\sigma_{q_i} \cdot \lambda_{a_{1, 1}, s - 1}) \sigma_{q_i} \lambda_{a_{1, 1}, s - 1}\\
\vdots & \vdots & \vdots \\
p_{0}(\sigma_{q_i} \cdot \lambda_{a_{1, 1}, 0}) (\sigma_{q_i} \lambda_{a_{1, 1}, 0})^{r - s - 1}& \dots & p_{0}(\sigma_{q_i} \cdot \lambda_{a_{1, 1}, s - 1}) (\sigma_{q_i} \lambda_{a_{1, 1}, s - 1})^{r - s - 1}
\end{bmatrix}
\begin{bmatrix}
c^{q_i}_{1, b(1, 0)} \\
c^{q_i}_{1, b(1, 1)} \\
\vdots \\
c^{q_i}_{1, b(1, s - 1)}
\end{bmatrix}. 
\end{align*}
Note that for all $q_i \in Q$, we need to download all the $s$ symbols in $c^{q_i}_{1}$. Thus, $P \cdot L_2$ results in a $(r - s) \times 1$ constant vector, which we call $\mathbf{y}$.  
\begin{align*}
P \cdot L_3 &= \begin{bmatrix}
p_{0, 0} & \dots & p_{0, r - 1} \\
p_{1, 0} & \dots & p_{1, r -1} \\
\vdots & \vdots & \vdots \\
p_{r - s - 1, 0} & \dots & p_{r - s - 1, r - 1}
\end{bmatrix}
\begin{bmatrix}
1 & \dots & 1 \\
\sigma_{v_1} \cdot \lambda_{r_1, b_{r_1}} & \dots & \sigma_{v_{|V|}} \cdot \lambda_{r_{|R|}, a_{b_{|R|}}} \\
\vdots & \vdots & \vdots \\
\sigma_{v_1}^{r - 1} \cdot \lambda_{r_1, b_{r_1}}^{r - 1} & \dots & \sigma_{v_{|V|}}^{r - 1} \cdot \lambda_{r_{|R|}, b_{r_{|R|}}}^{r - 1}
\end{bmatrix} 
\begin{bmatrix}
\mu^{(b)}_{v_1, 1, 1} \\
\mu^{(b)}_{v_2, 1, 1} \\
\vdots \\
\mu^{(b)}_{v_{|V|}, 1, 1}
\end{bmatrix}
\\
& = \begin{bmatrix}
p_{0}(\sigma_{v_1} \cdot \lambda_{r_1, b_{r_1}}) & \dots & p_{0}(\sigma_{v_{|V|}} \cdot \lambda_{r_{|R|}, b_{r_{|R|}}}) \\
p_{1}(\sigma_{v_1} \cdot \lambda_{r_1, b_{r_1}}) & \dots & p_{1}(\sigma_{v_{|V|}} \cdot \lambda_{r_{|R|}, b_{r_{|R|}}}) \\
\vdots & \vdots & \vdots \\
p_{r - s - 1}(\sigma_{v_1} \cdot \lambda_{r_1, b_{r_1}}) & \dots & p_{r -s - 1}(\sigma_{v_{|V|}} \cdot \lambda_{r_{|R|}, b_{r_{|R|}}})
\end{bmatrix} 
\begin{bmatrix}
\mu^{(b)}_{v_1, 1, 1} \\
\mu^{(b)}_{v_2, 1, 1} \\
\vdots \\
\mu^{(b)}_{v_{|V|}, 1, 1}
\end{bmatrix}
\\
& = \begin{bmatrix}
p_{0}(\sigma_{v_1} \cdot \lambda_{r_1, b_{r_1}}) & \dots & p_{0}(\sigma_{v_{|V|}} \cdot \lambda_{r_{|R|}, b_{r_{|R|}}}) \\
p_{0}(\sigma_{v_1} \cdot \lambda_{r_1, b_{r_1}}) \sigma_{v_1} \lambda_{r_1, b_{r_1}} & \dots & p_{0}(\sigma_{v_{|V|}} \cdot \lambda_{r_{|R|}, b_{r_{|R|}}}) \sigma_{v_{|V|}} \lambda_{r_{|R|}, b_{r_{|R|}}}\\
\vdots & \vdots & \vdots \\
p_{0}(\sigma_{v_1} \cdot \lambda_{r_1, b_{r_1}}) (\sigma_{v_1}\lambda_{r_1, b_{r_1}})^{r - s -1} & \dots & p_{0}(\sigma_{v_{|V|}} \cdot \lambda_{r_{|R|}, b_{r_{|R|}}})(\sigma_{v_{|V|}}\lambda_{r_{|R|}, b_{r_{|R|}}})^{r - s - 1}
\end{bmatrix}
\begin{bmatrix}
\mu^{(b)}_{v_1, 1, 1} \\
\mu^{(b)}_{v_2, 1, 1} \\
\vdots \\
\mu^{(b)}_{v_{|V|}, 1, 1}
\end{bmatrix}.
\end{align*}
Observe that $p_{0}(\sigma_{v_1} \cdot \lambda_{r_1, b_{r_1}}), p_{0}(\sigma_{v_2} \cdot \lambda_{r_2, b_{r_2}}), \dots, p_{0}(\sigma_{v_{|V|}} \cdot \lambda_{r_{|R|},b_{r_{|R|}}})$ are all nonzero and $P \cdot L_3$ is a full-rank matrix. Further, all $(r - s) \times (r - s)$ sub-matrices of $P \cdot L_3$ also have full rank, which follows from its Vandermonde-like structure. Finally, we have
\begin{align*}
\begin{bmatrix}
p_{0}(\sigma_{v_1} \cdot \lambda_{r_1, b_{r_1}}) & \dots & p_{0}(\sigma_{v_{|V|}} \cdot \lambda_{r_{|R|}, b_{r_{|R|}}}) \\
p_{0}(\sigma_{v_1} \cdot \lambda_{r_1, b_{r_1}}) \sigma_{v_1} \lambda_{r_1, b_{r_1}} & \dots & p_{0}(\sigma_{v_{|V|}} \cdot \lambda_{r_{|R|}, b_{r_{|R|}}}) \sigma_{v_{|V|}} \lambda_{r_{|R|}, b_{r_{|R|}}}\\
\vdots & \vdots & \vdots \\
p_{0}(\sigma_{v_1} \cdot \lambda_{r_1, b_{r_1}}) (\sigma_{v_1}\lambda_{r_1, b_{r_1}})^{r - s -1} & \dots & p_{0}(\sigma_{v_{|V|}} \cdot \lambda_{r_{|R|}, b_{r_{|R|}}})(\sigma_{v_{|V|}}\lambda_{r_{|R|}, b_{r_{|R|}}})^{r - s - 1}
\end{bmatrix}
\begin{bmatrix}
\mu_{v_1, 1, 1}^{(b)} \\
\mu_{v_2, 1, 1}^{(b)} \\
\vdots \\
\mu_{v_{|V|}, 1, 1}^{(b)}
\end{bmatrix}
= - \mathbf{y}.
\end{align*}
Observe that by downloading any subset of size $|V| - r - s$ from $F_{V} = (\mu_{v_1, 1, 1}^{(b)}, \dots, \mu_{v_{|V|}, 1, 1}^{(b)})$, the remaining symbols can recovered from the above equation. As stated earlier, we can now solve the equation $L_1 = - L_2 - L_3$ to recover all the $s$ symbols in $F_{L_1}$, completing the argument.

\subsection{AG codes} \label{agcodes}
Let $\mathcal{X}$ be a smooth, projective, absolutely irreducible curve of genus $g$ defined over $\mathbb{F}_{q}$.  We denote by $\mathbb{F}_{q}(\mathcal{X})$ the function field of $\mathcal{X}$. An element of $\mathbb{F}_{q}(\mathcal{X})$ is called a function. The normalized discrete valuation corresponding to a point $P$ of $\mathbb{F}_{q}(\mathcal{X})$ is written as $\varv$. A point $P$ is said to be $\mathbb{F}_{q}$ if $P^{\sigma} = P$ for all $\sigma$ in the Galois group $\gal(\bar{\mathbb{F}}_{q}/\mathbb{F}_{q})$. Likewise, a divisor $G = \sum_{P} m_{P} P$ is said to be $\mathbb{F}_{q}$-rational if $G^{\sigma} = \sum_{P} m_{P} P^{\sigma} = G$ for all $\sigma$ in the Galois group $\gal(\bar{\mathbb{F}}_{q}/\mathbb{F}_{q})$. 

For an $\mathbb{F}_{q}$-rational divisor $G$, the Riemann-Roch space associated to $G$ is
\[\mathcal{L}_{\mathbb{F}_{q}} (G) = \{ f \in \mathbb{F}_{q}(\mathcal{X}) \setminus \{0\} : \Div{(f)} +G \ge 0 \} \cup \{0\}.\]
Then $\mathcal{L}_{\mathbb{F}_{q}}$ is a finite-dimensional vector space over $\mathbb{F}_{q}$ and we denote its dimension by $\ell(G)$. By the Riemann-Roch theorem, we have
\[\ell(G) \ge \deg(G) + 1 - g,\]
where the equality holds if $\deg(G) \ge 2g - 1$.

Let $P_1, \dots, P_{N}$ be pairwise distinct $\mathbb{F}_{q}$-rational points of $\mathcal{X}$ and $R = P_1 + \dots + P_{N}$. Choose an $\mathbb{F}_{q}$-rational divisor $G$ in $\mathcal{X}$ such that $\supp(G) \cap \supp(R) = \emptyset$, and a vector $\mathbf{v} = (v_1, \dots, v_{N})$ such that $v_i \in (\mathbb{F}_{q})^{*}$ for every $i \in [N]$. Then, $\varv_{P_i}(f) \ge 0$ for all $i \in [N]$ and any $f \in \mathcal{L}_{\mathbb{F}_{q}}(G)$. 

Consider the map
\[\psi : \mathcal{L}(G) \mapsto \mathbb{F}_{q}^{n}, f \mapsto (v_1 f(P_1), \dots, v_{N}f(P_N)).\]
Obviously the image of $\psi$ is a subspace of $\mathbb{F}_{q}^{N}$. The image of $\psi$ is denoted as $\mathcal{C}^{II} = C_{\mathcal{L}}(R, G, \mathbf{v})$ which is called an algebraic-geometry code (AG code hereon). If $\deg(G) < N$, then $\psi$ is an embedding and we have $\dim(C_{\mathcal{L}}(R, G, \mathbf{v})) = \ell(G)$. By the Riemann-Roch theorem, we can estimate the parameters of an AG code. 
\begin{proposition} \label{prop:r-r}
	$C_{\mathcal{L}}(R, G, \mathbf{v})$ is an $[N, K, D]$-linear code $\mathbb{F}_{q}$ with parameters
	\[K = \ell(G) - \ell(G- R), D \ge N - \deg(G).\]
	If $G$ satisfies $g \le \deg(G) < N$,then
	\[K = \ell(G) \ge \deg(G) - g + 1, D \ge N - \deg(G).\]
	If addtionally $2g - 2 < \deg(G) < N$, then
	\begin{equation} \label{eq:agdimension}
	K = \deg(G) - g + 1.
	\end{equation}
\end{proposition}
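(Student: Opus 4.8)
To prove the proposition I would work entirely with the evaluation map $\psi\colon \mathcal{L}(G)\to\mathbb{F}_q^{N}$, $f\mapsto (v_1 f(P_1),\dots,v_N f(P_N))$, whose image is by definition $C_{\mathcal{L}}(R,G,\mathbf{v})$. Since $\psi$ is $\mathbb{F}_q$-linear, the code is $\mathbb{F}_q$-linear of length $N$, and rank--nullity reduces everything to two tasks: computing $\dim\ker\psi$, which yields $K$, and lower bounding the weight of a nonzero codeword, which yields $D$. The Riemann--Roch inequality $\ell(G)\ge\deg(G)+1-g$ together with its equality case for $\deg(G)\ge 2g-1$, both recalled above, then give the refined statements.

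The first step is the kernel computation. Because each $v_i\in(\mathbb{F}_q)^{*}$, a function $f$ lies in $\ker\psi$ iff $f(P_i)=0$, i.e. $\varv_{P_i}(f)\ge 1$, for every $i\in[N]$. Using $\supp(G)\cap\supp(R)=\emptyset$, so that every $f\in\mathcal{L}(G)$ is already regular at each $P_i$, this system of vanishing conditions is exactly $\Div(f)+G-R\ge 0$, i.e. $f\in\mathcal{L}(G-R)$; hence $\ker\psi=\mathcal{L}(G-R)$ and $K=\ell(G)-\ell(G-R)$, the first identity.

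Next I would bound the distance. A nonzero codeword is $\psi(f)$ for some $f\in\mathcal{L}(G)\setminus\mathcal{L}(G-R)$, and its zero coordinates are exactly the $P_i$ at which $f$ vanishes. Since $f\in\mathcal{L}(G)$, its pole divisor is dominated by the (positive part of the) divisor $G$, so the degree of its zero divisor --- an upper bound for the number of such $P_i$ --- is at most $\deg(G)$; therefore every nonzero codeword has weight at least $N-\deg(G)$, giving $D\ge N-\deg(G)$. For the refinements: if $g\le\deg(G)<N$ then $\deg(G-R)=\deg(G)-N<0$ forces $\mathcal{L}(G-R)=\{0\}$, so $K=\ell(G)\ge\deg(G)+1-g$ by Riemann--Roch; and if in addition $\deg(G)>2g-2$, the Riemann--Roch bound is tight, giving $K=\deg(G)-g+1$ exactly.

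I do not expect a genuine obstacle here; this is the textbook description of an algebraic-geometry code through Riemann--Roch. The single point that deserves care is the kernel identification, where the disjointness of $\supp(G)$ and $\supp(R)$ is precisely what lets ``$f$ vanishes at every $P_i$'' be rewritten as the single divisor inequality $f\in\mathcal{L}(G-R)$; one should also keep in mind that the clean statement ``$f$ has at most $\deg(G)$ zeros among the $P_i$'' uses that $G$ is effective, as it is in the construction. The equality case could equivalently be argued via a canonical divisor $K_{\mathcal{X}}$, using $\ell(G)-\ell(K_{\mathcal{X}}-G)=\deg(G)+1-g$ and $\ell(K_{\mathcal{X}}-G)=0$ once $\deg(G)>\deg(K_{\mathcal{X}})=2g-2$, but the inequality quoted in the appendix already suffices.
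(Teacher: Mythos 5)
The paper does not actually give a proof of Proposition~\ref{prop:r-r}; it states the proposition as a standard consequence of the Riemann--Roch theorem (the surrounding text in the appendix only recalls $\ell(G)\ge\deg(G)+1-g$ and then asserts the parameter estimates). Your argument is the standard textbook proof and it is correct: identifying $\ker\psi=\mathcal{L}(G-R)$ via the disjoint-supports assumption and rank--nullity gives $K$, the negative-degree observation $\deg(G-R)<0$ when $\deg(G)<N$ collapses $\mathcal{L}(G-R)$ to $\{0\}$, and the two refinements are exactly the Riemann inequality and the Riemann--Roch equality case.

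One small point worth tightening is the distance bound. Your stated reasoning --- ``the degree of the zero divisor is at most $\deg(G)$'' --- as you note, implicitly leans on $G$ being effective (it really bounds $\deg Z = \deg P \le \deg G_+$, and $\deg G_+ = \deg G$ only when $G\ge 0$). The construction in the paper does not insist $G$ be effective. The cleaner argument, which needs no such hypothesis, is the one you already have all the pieces for: if a nonzero $f\in\mathcal{L}(G)$ vanishes at $w$ of the $P_i$ then, using $\supp(G)\cap\supp(R)=\emptyset$, $f\in\mathcal{L}\bigl(G-\sum_{j}P_{i_j}\bigr)$, and if $w>\deg(G)$ this Riemann--Roch space has negative degree and is therefore zero, a contradiction; hence $\wt(\psi(f))\ge N-\deg(G)$. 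With that substitution the proof is clean and matches what the paper implicitly relies on.
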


\end{document}